\newcommand{\xvec}{\boldsymbol{x}}
\newcommand{\set}[1]{\{#1\}}
\newcommand{\ins}{\text{in}}
\newcommand{\out}{\text{out}}
\newtheorem{lemma}{Lemma}
\newtheorem{theorem}{Theorem}
\newtheorem{fact}{Fact}
\newtheorem{definition}{Definition}
\newcommand{\problem}{problem}
\title{On the computational tractability of a geographic clustering
  problem arising in redistricting}
\author{Vincent Cohen-Addad}
\affil{CNRS and Sorbonne Universit\'e, Paris}
\author{Philip N. Klein\thanks{Supported by National Science Foundation grant CCF-1841954}}
\affil{Brown University}
\author{D\'aniel Marx}
\affil{Max Planck Institut f\"ur Informatik}
\begin{document}
\begin{titlingpage}
\maketitle
\begin{abstract}
  \emph{Redistricting} is the problem of dividing up a state into a
  given number $k$ of 
  regions (called \emph{districts}) where the voters in each district
  are to elect a representative. The three primary criteria are: that
  each district be connected, that the populations of the districts be
  equal (or nearly equal), and that the districts are ``compact''.
  There are multiple competing definitions of compactness, usually
  minimizing some quantity.

  One measure that has been recently promoted by Duchin and others
  (see e.g.~\cite{deford2019recombination}) is number of \emph{cut edges}.  In
  redistricting, one is generally given atomic regions out of which
  each district must be built (e.g., in the U.S., census blocks).  The
  populations of the atomic regions are given.  Consider the graph
  with one vertex per atomic region and an edge between atomic regions
  that share a boundary.  Define the weight of a vertex to be the
  population of the corresponding region.  A districting plan is a
  partition of vertices into $k$ pieces so that the parts have nearly
  equal weights and each part is connected.  The districts are
  considered compact to the extent that the plan minimizes the number
  of edges crossing between different parts.

  There are two natural computational problems: find the most compact
  districting plan, and sample districting plans (possibly under a
  compactness constraint) uniformly at random.

  Both problems are NP-hard so we consider restricting the input graph
  to have branchwidth at most $w$.  (A planar graph's branchwidth is
  bounded, for example, by its diameter.)  If both $k$ and $w$ are
 bounded by constants, the problems are solvable in polynomial time.
 For simplicity of notation, assume that each vertex has unit weight.
 We would like algorithms whose running times are of the form
 $O(f(k,w) n^c)$ for some constant $c$ independent of $k$ and $w$ (in
 which case the problems are said to be \emph{fixed-parameter
<   tractable} with respect to those parameters), we show that, under
 standard complexity-theoretic assumptions, no such algorithms exist.
 However, we do show that there exist algorithms with running
 time  $O(c^wn^{k+1})$.  Thus if the diameter of the graph is
 moderately small and the number of districts is very small, our
 algorithm is useable.
\end{abstract}
\end{titlingpage}
\section{Introduction}

For an undirected planar graph $G$ with vertex-weights and a positive integer
$k$, a \emph{connected partition}
of the vertices of $G$ is a partition into parts each of which
induces a connected subgraph.  If $G$ is equipped with nonnegative
integral vertex weights
and $[L, U)$ is an interval
we say such a partition has \emph{part-weight} in $[L, U)$ if the sum of
weights of each part lies in the interval.  If $G$ is equipped with
nonnegative edge costs, we say the \emph{cost} of such a partition is
the sum of costs of edges $uv$ where $u$ and $v$ lie in different
parts.

  \begin{figure}
\centering
  \includegraphics[scale=.5]{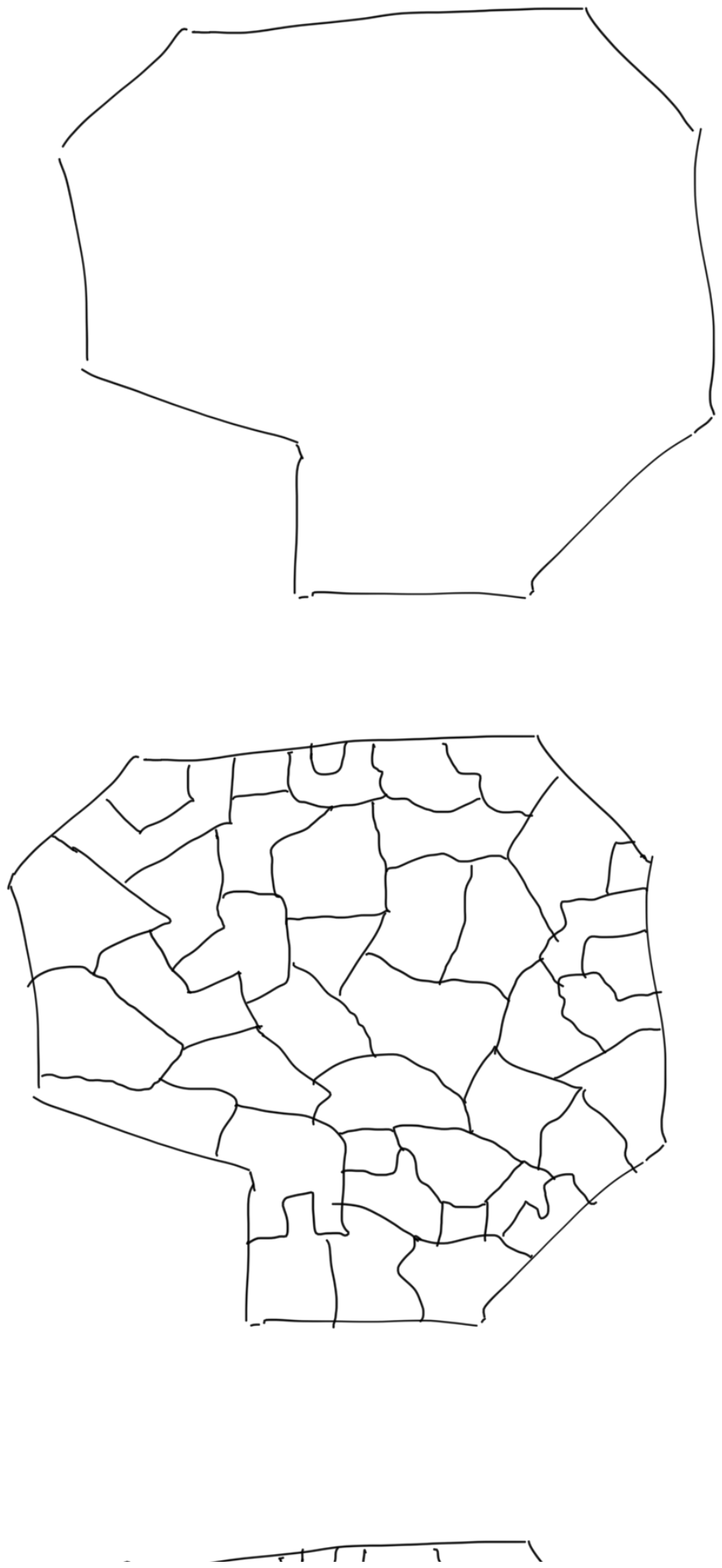}
  \caption{Outline of an (imaginary) state}
  \label{fig:outline}
\end{figure}
\begin{figure}
  \centering
  \includegraphics[scale=.5]{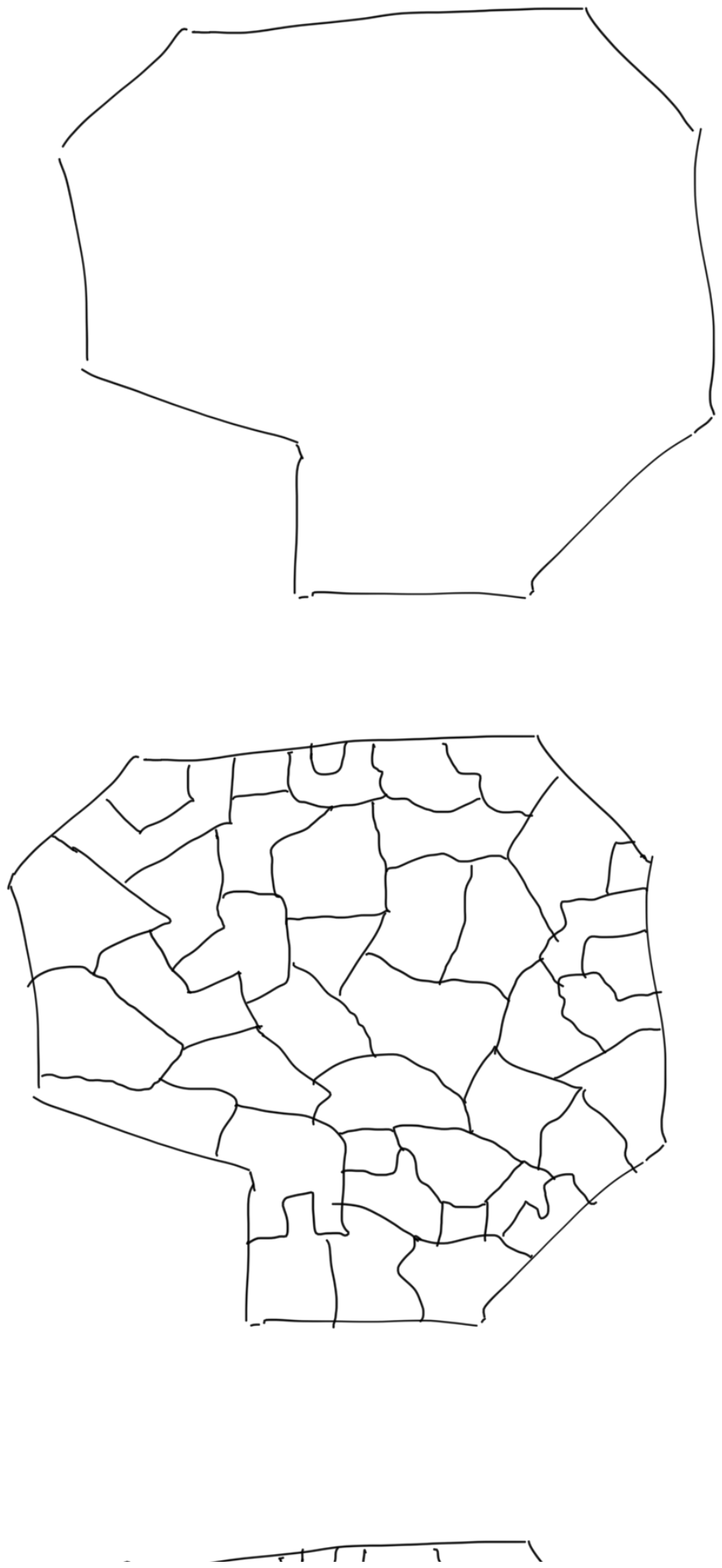}
\caption{The state subdivided into smaller regions (\emph{atoms}), e.g. census
  tracts}
\label{fig:subdivided}
\end{figure}
\begin{figure}
  \centering
  \includegraphics[scale=.5]{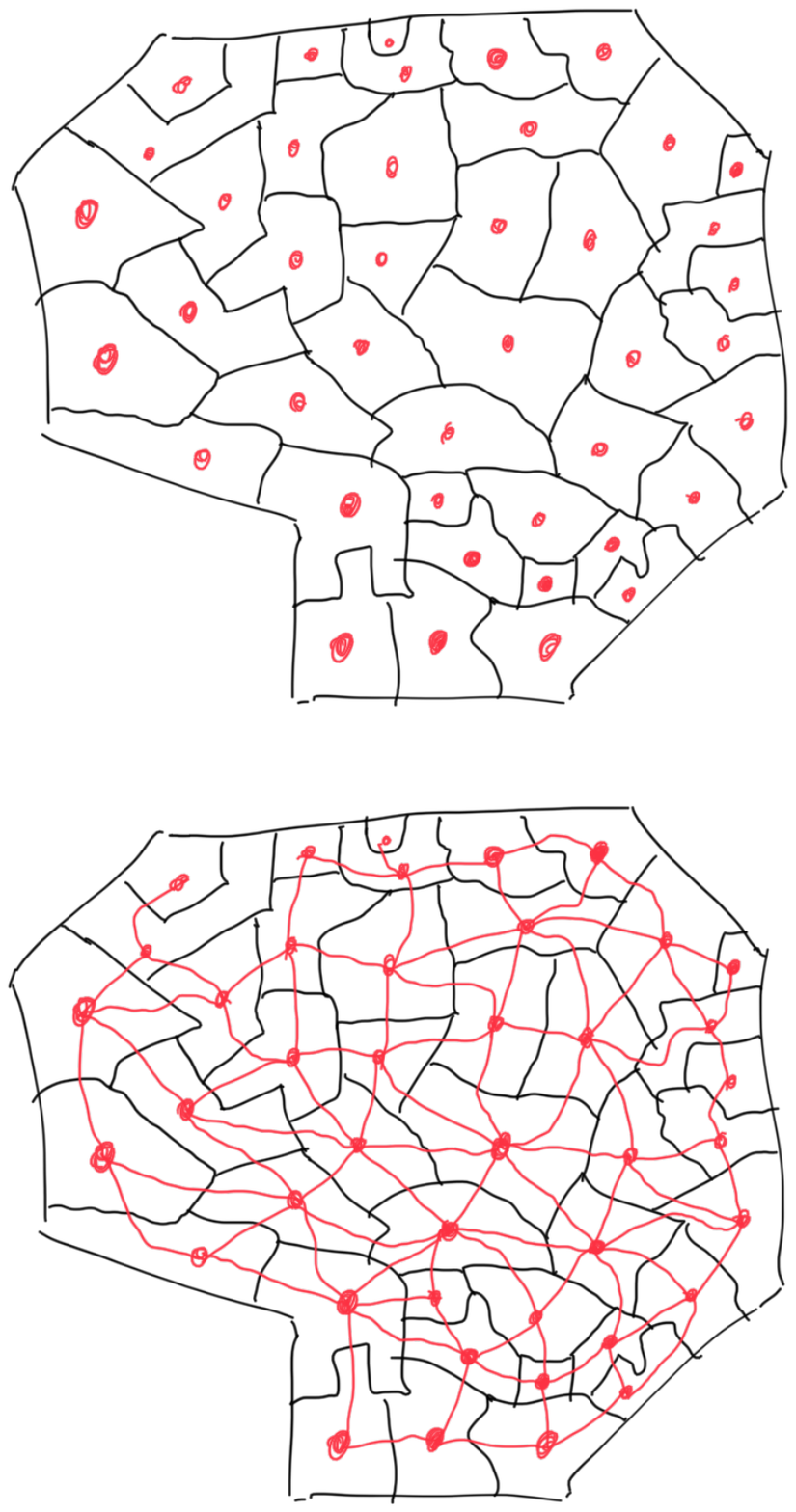}
\caption{In the planar dual, each atomic region is represented by a
  node.  (There is also a node for the single infinite region outside
  the state boundary but here we ignore that node.}
\label{fig:dual-vertices}
\end{figure}
\begin{figure}
  \centering
  \includegraphics[scale=.5]{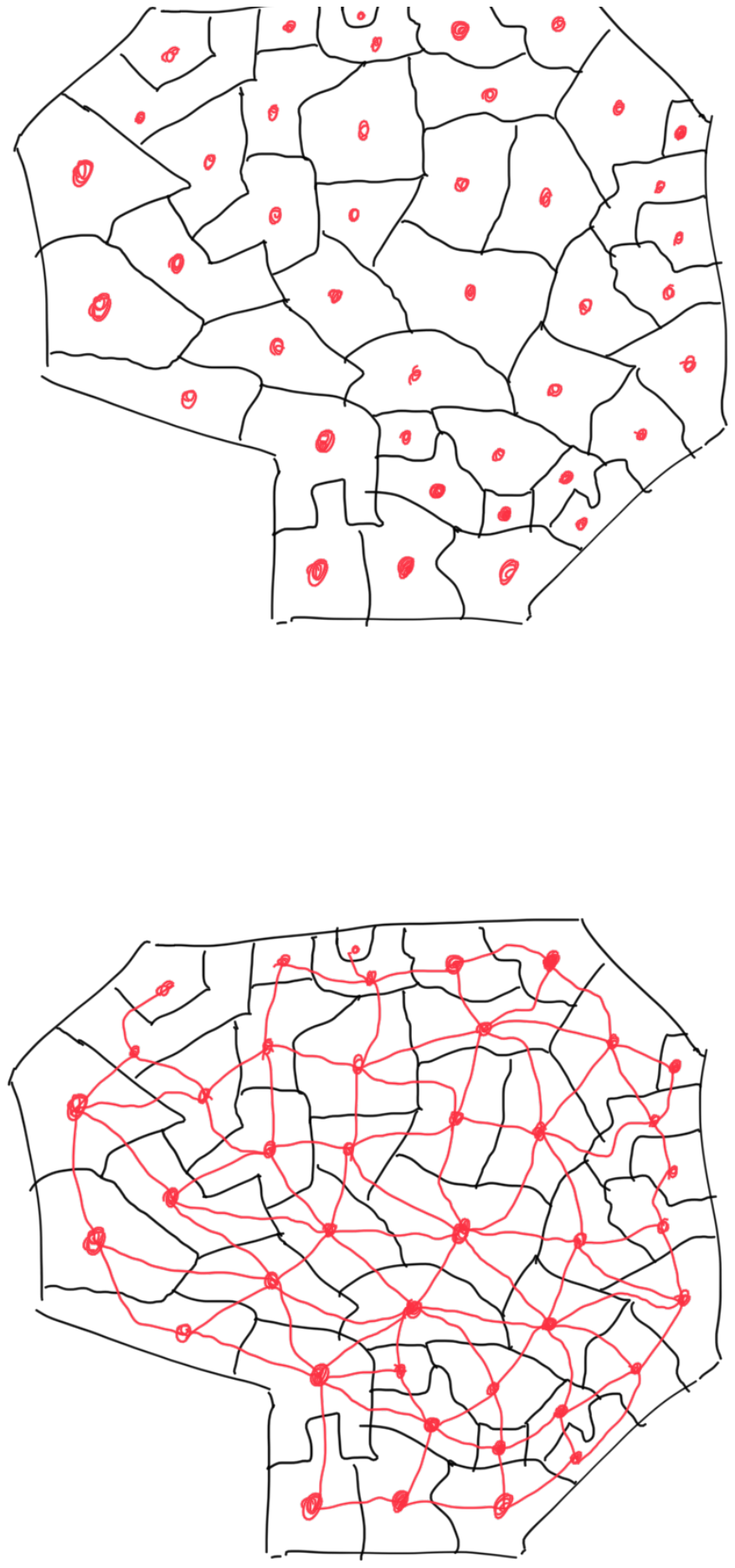}
\caption{For each maximal contiguous
  boundary segment between a pair of atomic regions, the planar dual
  has an edge between the corresponding pair of nodes.  (Again, the
  dual also has edges corresponding to segments of the boundary of the
  state but we ignore those here.}
\label{fig:dual-edges}
\end{figure}
\begin{figure}
  \centering
  \includegraphics[scale=.5]{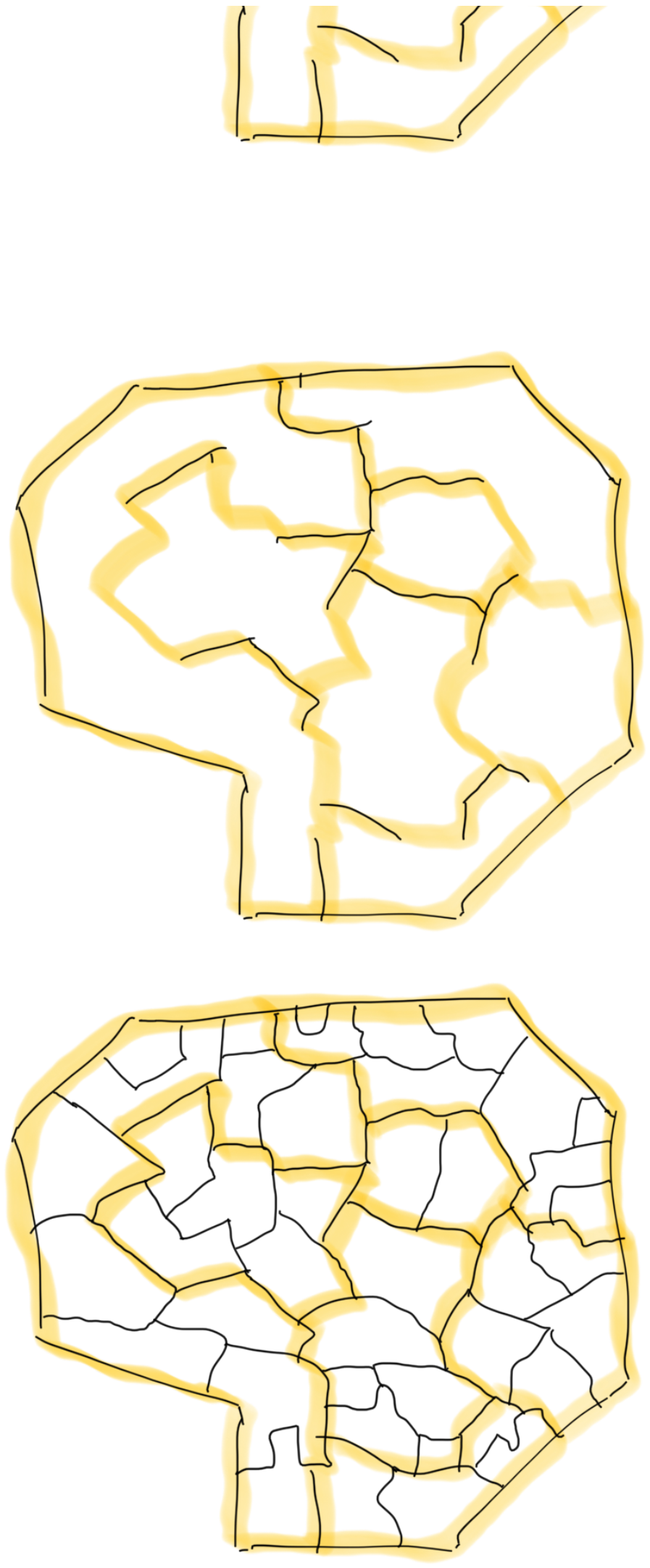}
\caption{This figure shows an example of a districting plan with seven
  districts.  Each district is the union of several atomic regions.}
\end{figure}
\begin{figure}
  \centering
  \includegraphics[scale=.5]{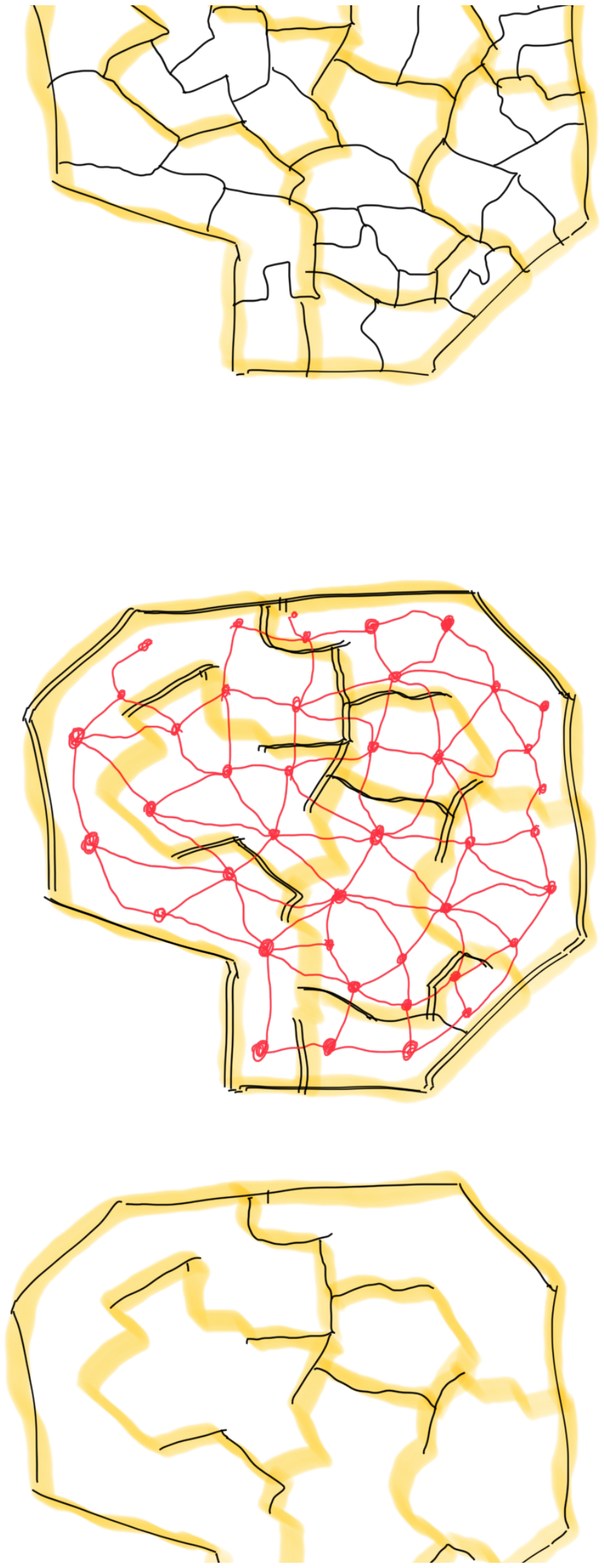}
\caption{The districting plan superimposed on the planar dual, showing
  that it corresponds to a partition of the atoms into connected
  parts; the cost of the solution is the sum of costs of edges of the
  dual that cross between different parts.  In this paper, a
  districting plan is compact to the extent that the total cost of the
  solution is small.}
\end{figure}
\begin{figure}
  \centering
  \includegraphics[scale=.5]{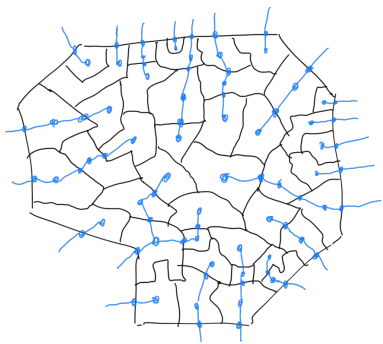}
  \caption{Let $G$ be the graph of atomic regions.  As stated in
    Section~\ref{sec:radial}, the \emph{radial graph} of $G$ has a
    node for every vertex of $G$ and a node for every face of $G$, and
    an edge between a vertex-node and a face-node if the vertex lies
    on the face's boundary.  This diagram shows that every face is
    reachable from the outer face within six hops in the \emph{radial
      graph} of the graph $G$ of atomic regions.  This implies that
    the branchwidth of $G$ and of its dual are at most six.}
  \label{fig:radial}
\end{figure}

\begin{figure} \centering
  \includegraphics[width=3in]{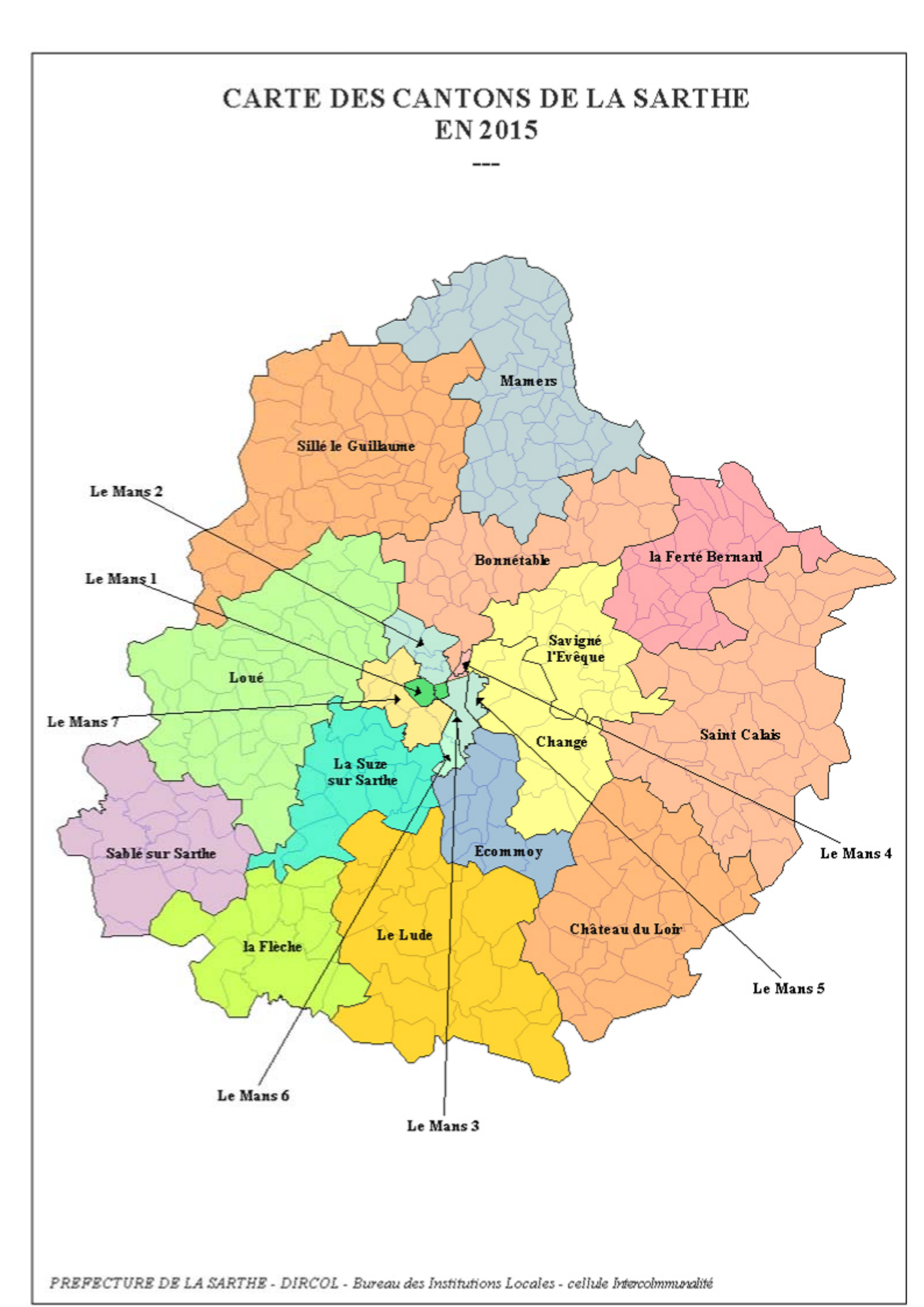}
  \caption{This map shows the twenty-one cantons for the department ``Sarthe'' of
    France.  The cantons are the atomic regions for the redistricting
    of Sarthe.  The corresponding radial graph has radius six, so
    there is a branch decomposition of width $w=6$.  For the upcoming
    redistricting of France, Sarthe must be divided into $k=3$
    districts.}
  \label{fig:France}
\end{figure}


Consider the following computational problems:
\begin{itemize}
  \item \emph{optimization:} Given a planar graph $G$ with vertex weights and
    edge costs, a number $k$, and a weight interval $[L, U)$, find
    the minimum cost of a partition into $k$ connected parts with part-weight in
    $[L,U)$.
    \item \emph{sampling:} Given in addition a number $C$, generate
      uniformly at random a cost-$C$ partition into $k$ connected parts with part-weight
      in $[L,U)$.
\end{itemize}
    
These problem arises in political redistricting.  Each vertex
represents a small geographical
region (such as a \emph{census block} or \emph{census tract} or \emph{county}), and its
weight represents the number of people living in the region.  
Each part is a \emph{district}.  A larger geographic region (such as a
state) must be partitioned into $k$ districts when the state is to be
represented in a legislative body by $k$ people; each district elects
a single representative.  The partition is called a \emph{districting plan}.

The rules governing this partitioning 
vary from place to place, but usually there are (at least) three important goals:
\emph{contiguity}, \emph{compactness}, and \emph{population balance}.\footnote{These terms
are often not formally defined in law.}
\begin{itemize}
  \item \emph{Contiguity} is often interpreted as connectivity; we
    represent this by requiring that the set
    of small regions forming each district is connected via shared
    boundary edges.
\item \emph{Population balance} requires that two different districts
  have approximately equal numbers of people.
  \item 
    One measure of \emph{compactness} is the number
    of pairs of adjacent small regions that lie in distinct
    districts.
  \end{itemize}
Thus In the definitions of 
the \emph{optimization} and \emph{sampling} problems above, the
connectivity constraint reflects the contiguity requirement, the
part-weight constraint reflects the population balance requirement,
and the cost is a measure of compactness.

The \emph{optimization} problem described above arises in
computer-assisted redistricting; an algorithm for solving this problem
could be used to select a districting plan that is optimally compact
subject to contiguity and desired population balance, where
compactness is measured as discussed above.  

The \emph{sampling} problem arises in evaluating a plan; in court
cases~\cite{Chen,Pegden,herschlag2017evaluating,herschlag2018quantifying,pildesbrief} expert witnesses argue that a districting
plan reflects an intention to gerrymander by comparing it to
districting plans randomly sampled from a distribution.  The expert
witnesses use Markov Chain Monte Carlo (MCMC), albeit unfortunately on
Markov chains that have not been shown to be rapidly mixing, which
means that the samples are possibly not chosen according to anything
even close to a uniform distribution.  There have been many papers
addressing random sampling of districting plans (e.g. ~\cite{bangia2017redistricting,carter2019mergesplit,deford2019recombination,herschlag2018quantifying,herschlag2017evaluating}) but, despite
the important role of random sampling in court cases, there are no
results on provably uniform or nearly uniform sampling from a set of
realistic districting plans for a realistic input in a reasonable
amount of time.

It is known that even basic versions of these problems are
NP-hard. If the vertex weights are allowed to very large integers,
expressed in binary, the NP-hardness of \textsc{Subset Sum} already
implies the NP-completeness of partitioning the vertices into two
equal-weight subsets.  However, in application to redistricting the
integers are not very large.  For the purpose of seeking hardness
results, it is better to focus on a special case, the
\emph{unit-weight} case, in which each vertex has weight one.
For this case,
Dyer and Frieze~\cite{DyerF85} showed that, for any fixed
$k\geq 3$, it is NP-hard to find a weight-balanced partition into $k$
connected parts of the vertices of a planar graph.  Najt, Deford, and
Solomon~\cite{NDSolomon} showed that even for $k=2$ and without
the constraint on balance, uniform sampling of partitions into two
connected parts is NP-hard.  

Following Ito et al.~\cite{ItoZN,ItoGZN} and Najt et
al.~\cite{NDSolomon}, we therefore consider a further restriction on
the input graph: we consider graphs with bounded
\emph{branchwidth}/\emph{treewidth}.\footnote{\emph{Treewidth} and
  branchwidth are very similar measures; they are always within a
  small constant factor of each other.  Thus a graph
  has small treewidth if and only if it has small branchwidth.}

The branchwidth of a graph is a measure of how treelike the graph is:
often even an NP-hard graph problem is quickly solvable when the input
is restricted to graphs with low branchwidth.  For planar graphs in
particular, there are known bounds on branchwidth that are relevant to
the application.

\begin{lemma} A planar graph on $n$ vertices has 
  branchwidth $O(\sqrt{n})$.
\end{lemma}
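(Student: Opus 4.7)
The plan is to derive this from the Lipton--Tarjan planar separator theorem, which states that every planar graph on $m$ vertices contains a vertex set $S$ of size $O(\sqrt{m})$ whose removal leaves two sides of total size at most $2m/3$ each, with no edges between them. Since branchwidth and treewidth differ by at most a constant factor (by Robertson--Seymour, the two parameters are always within a factor of $3/2$ of each other), it suffices to establish the bound for treewidth.

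I would build a tree decomposition recursively. Given a planar graph $G$ on $n$ vertices, apply the separator theorem to obtain $V(G) = A \cup S \cup B$ with $|A|,|B| \le 2n/3$ and $|S| \le c\sqrt{n}$. Recursively construct tree decompositions of the (planar) subgraphs $G[A]$ and $G[B]$. Now add all of $S$ to every bag of each of these two decompositions, and join them by an edge between any two of their bags. The running-intersection property holds for vertices of $S$ trivially (they lie in every bag), and holds for vertices of $A$ and $B$ because it was preserved from the recursive decompositions. Every edge is covered: edges inside $G[A]$ and $G[B]$ by the recursive decompositions, edges inside $G[S]$ by any bag, and edges between $S$ and $A$ (resp.\ $S$ and $B$) because the non-$S$ endpoint lies in some bag that now also contains $S$. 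Crucially, no edges are lost because $S$ separates $A$ from $B$.

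Letting $W(n)$ denote the maximum treewidth of an $n$-vertex planar graph, the construction yields the recurrence
\[
W(n) \;\le\; W(\lceil 2n/3 \rceil) + c\sqrt{n},
\]
which unrolls into the geometric sum $c\sqrt{n}\sum_{i \ge 0} (2/3)^{i/2}$, converging to $O(\sqrt{n})$ because $\sqrt{2/3} < 1$. I expect the main subtlety to be verifying that ``add $S$ to every bag'' really does yield a valid tree decomposition (which hinges on $S$ being a true vertex separator, so that there are no edges between $A$ and $B$ needing to be covered), and then observing that the recurrence solves to $O(\sqrt{n})$ rather than, say, $O(\sqrt{n}\log n)$ --- the geometric factor $\sqrt{2/3}$ in the recursion, not $2/3$, is what makes this work. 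Converting the resulting treewidth bound to the stated branchwidth bound is immediate from the Robertson--Seymour comparison.
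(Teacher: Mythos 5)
Your argument is correct, and in fact the paper states this lemma as a known fact without giving any proof, so there is nothing to compare it against; the separator-based recursion you describe is the standard way to establish it. Two tiny points of hygiene: the Lipton--Tarjan conclusion is that \emph{each} side has at most $2n/3$ vertices (your phrase ``total size at most $2m/3$ each'' conflates the two readings, though your use of it is the correct one), and for the final conversion you only need the easy direction $\mathrm{bw}(G) \le \mathrm{tw}(G)+1$ of the Robertson--Seymour comparison, not the full factor-$3/2$ equivalence. Your identification of the key subtleties --- that $S$ being a genuine separator is what makes ``add $S$ to every bag'' a valid tree decomposition, and that the geometric ratio $\sqrt{2/3}<1$ is what keeps the recurrence at $O(\sqrt{n})$ rather than $O(\sqrt{n}\log n)$ --- is exactly right.
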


\begin{lemma} A planar graph of diameter $d$ has branchwidth $O(d)$.
\end{lemma}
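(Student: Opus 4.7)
The plan is to invoke the classical bound on the branchwidth of planar graphs of bounded radial/outerplanar depth and reduce the diameter hypothesis to that setting. The connection is the one already hinted at informally in Figure~\ref{fig:radial}: for a plane graph with outer face $f_0$, if every face lies within $k$ hops of $f_0$ in the radial graph, then its branchwidth is $O(k)$. This follows, up to constants, from Bodlaender's treewidth bound for $k$-outerplanar graphs together with the constant-factor relationship between treewidth and branchwidth; I would cite it rather than reprove it.

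First I would pick any vertex $r$ of $G$ and, since branchwidth is a graph invariant independent of the choice of planar embedding, re-embed $G$ so that $r$ is incident to the outer face $f_0$. Next I would translate the diameter bound on $G$ into a radial eccentricity bound on $f_0$. Each edge $uw$ of $G$ lifts to a radial path of length $2$ through either of the two faces incident to $uw$, so a shortest $G$-path of length at most $d$ from any vertex $u$ to $r$ lifts to a radial walk of length at most $2d$. Since $r$ is adjacent to $f_0$ in the radial graph, every vertex-node lies within $2d+1$ radial hops of $f_0$, and every face-node, being adjacent to some vertex-node in the radial graph, lies within $2d+2$ hops.

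Applying the radial-radius bound on branchwidth then yields branchwidth $O(d)$, as required.

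The main obstacle is not the reduction above, which is essentially bookkeeping with small constants, but the underlying planar theorem that radial eccentricity from the outer face controls branchwidth up to constants. For a fully self-contained argument one would either cite the $k$-outerplanar treewidth bound or, alternatively, build the branch decomposition directly from the BFS layering of $G$ rooted at $r$, using the fact that each BFS layer acts as a planar separator between the interior and exterior subgraphs and combining adjacent layers to form the bags of the decomposition.
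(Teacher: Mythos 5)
Your proposal is correct and follows essentially the route the paper itself relies on: the paper states this lemma without proof and justifies it implicitly via the radial-graph result of Section~\ref{sec:radial} (a branch decomposition of width at most the radial eccentricity from a chosen node), which is exactly the bound you invoke after translating diameter $d$ into radial eccentricity $O(d)$. Your bookkeeping of the constants ($2d+2$) is sound; the only simplification available is that the cited result lets you root the radial BFS at any vertex-node directly, so the re-embedding step is unnecessary.
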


There is an stronger bound, which we will review in Section~\ref{sec:radial}.

Najt, Deford, and Solomon~\cite{NDSolomon} show that, for any fixed
$k$ and fixed $w$, the optimization and sampling problems
\emph{without the constraint on population balance} can be solved in
polynomial time on graphs of branchwidth at most $w$.\footnote{They
  use treewidth but the results are equivalent.}  Significantly, the
running time is $O(n^c)$ for some constant $c$ independent of $k$ and
$w$.  Such an algorithm is said to be \emph{fixed-parameter tractable}
with respect to $k$ and $w$, meaning that as long as $k$ and $w$ are
fixed, the problem is considered tractable.  Fixed-parameter
tractability is an important and recognized way of coping with
NP-completeness.

However, their result has two disadvantages.  First, as the authors
point out, the big O hides a constant that is astronomical; for NP-hard
problems, it is expected that the dependence on the parameters be
exponential but in this case it is a tower of exponentials.  As the
authors state, the constants in the theorems on which they rely are
``too large to be practically useful.''

Second, because their algorithm cannot handle the constraint on
population balance, the algorithm would not be applicable to
redistricting even if it were tractable.  The authors discuss (Remark
5.11 in~\cite{NDSolomon}) the extension of their approach to handle
balance: ``It is easy to add a relational formula...that restricts our
count to only balanced connected $k$-partitions....  From this it
should follow that ... [the problems are tractable].   However ... the
corresponding meta-theorem appears to be missing from the
literature.''

In our first result, we show that in fact what they seek does not
exist: under a standard complexity-theoretic assumption, \textbf{there is
no algorithm that is fixed-parameter tractable with respect to
both $k$ and $w$.}

More precisely, we use the analogue of NP-hardness for
fixed-parameter tractability, $W[1]$-hardness.

\begin{theorem} \label{thm:hardness}
For unit weights, finding a weight-balanced $k$-partition of a planar graph of width $w$
into connected parts is $W[1]$-hard with respect to $k+w$.
\end{theorem}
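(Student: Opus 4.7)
\medskip
\noindent\textbf{Proof proposal.}
The plan is to give a parameterized reduction from the \emph{Grid Tiling} problem, which is $W[1]$-hard parameterized by the grid dimension $k$. An instance is a $k\times k$ array of sets $S_{i,j}\subseteq [n]\times [n]$, and one must select an element $(a_{i,j},b_{i,j})\in S_{i,j}$ for every cell so that $a_{i,j}=a_{i,j+1}$ whenever $j<k$ and $b_{i,j}=b_{i+1,j}$ whenever $i<k$. Grid Tiling is attractive here because it already has the planar ``$k\times k$'' skeleton that we need to keep the branchwidth small in the reduced instance.

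Given such an instance, I would build a planar graph $G$ as a $k\times k$ arrangement of \emph{cell gadgets} $H_{i,j}$, with \emph{link gadgets} glued along each pair of horizontally or vertically adjacent cells. Each cell gadget $H_{i,j}$ contains one ``choice branch'' for each pair in $S_{i,j}$, together with padding vertices. The branches share a small constant-size interface on each of the four sides of the cell; on the east/west sides the interface geometry encodes the chosen first coordinate $a_{i,j}$, and on the north/south sides it encodes the chosen second coordinate $b_{i,j}$. A horizontal link between $H_{i,j}$ and $H_{i,j+1}$ is designed so that the only way to finish the partition with the right number of connected, balanced parts is if the two interfaces agree on the $a$-value; vertical links work analogously for $b$. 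In the intended solution the partition has one ``cell-part'' per cell (absorbing that cell's chosen branch and padding) plus a bounded number of ``link-parts'' per link, so the total number of parts $k'$ is $O(k^2)$. Because each gadget communicates with its neighbours only through a constant-size interface, the arrangement has branchwidth $O(k)$, so both $k'$ and $w'$ are bounded by a function of $k$.

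The correctness argument then consists of two directions. If the Grid Tiling instance is a yes-instance, take the partition in which each cell-part consists of the chosen branch plus the cell's padding, and each link-part is the canonical connected piece left over; the interface encodings match across links by construction, so this partition is balanced, connected, and of the required size. Conversely, the padding sizes will be chosen so that any balanced connected partition of $G$ into $k'$ parts is forced to sit inside each $H_{i,j}$ in the canonical way: exactly one branch is ``absorbed'' by the cell-part, and the branch identity is consistent across every link, which yields a Grid Tiling solution.

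The main obstacle is the simultaneous satisfaction of three constraints that pull against each other: the partition must have \emph{exactly} equal part sizes (unit weights), it must consist of \emph{connected} parts in a planar graph, and the interfaces between gadgets must be \emph{constant size} so that branchwidth does not grow with $n$. Engineering the padding so that the arithmetic of cell sizes, branch sizes, and link sizes forces exactly one branch to be chosen per cell---while keeping the interface narrow and the drawing planar---is the delicate step; the standard trick is to use large ``ballast'' blocks hanging off each cell and each link whose sizes are tuned so that any imbalance propagates into an infeasible remainder. A secondary obstacle is ruling out ``non-canonical'' partitions that cut a cell gadget in an unexpected way; here one typically makes the padding rigid by attaching it via cut-vertices or small separators so that a connected part containing any padding vertex is forced to contain all of it.
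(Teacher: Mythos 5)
Your plan is a reasonable high-level strategy (Grid Tiling is indeed the canonical source problem for $W[1]$-hardness of planar problems parameterized by treewidth), but as written it is a plan rather than a proof: every step that carries the actual difficulty is named and then deferred. Concretely, three things are missing and none of them is routine. (1) You never specify how a constant-size interface transmits one of $n$ possible values $a_{i,j}$; the only mechanism available under unit weights is the \emph{number of vertices} a part places on each side of the interface, which means the link-parts must participate in the same global exact-balance constraint as every cell-part, and the ballast in every cell must be tuned so that the four outgoing ``emissions'' of a cell sum to a choice-independent constant. (2) You never say where the $|S_{i,j}|-1$ non-chosen branches of a cell go; if they are all absorbed by the cell-part, the cell-part's size is choice-independent and the choice is not communicated, and if they are not, you need additional parts or attachments, which changes $k'$ and the balance arithmetic. (3) The soundness direction (``any balanced connected partition is forced to sit in each cell in the canonical way'') is asserted, not argued; ruling out parts that snake through several cells, or two link-parts that merge, is precisely where such reductions usually fail, and the exact-balance requirement makes this harder, not easier, because every deviation must be shown to propagate to a global infeasibility. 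Since you explicitly flag the gadget engineering as ``the delicate step'' and do not carry it out, the argument as it stands does not establish the theorem.

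For comparison, the paper avoids all of this by reducing from a problem whose constraint \emph{is already} an exact-sum constraint: unary \textsc{Bin Packing}, which is $W[1]$-hard parameterized by the number of bins $k$ (Jansen et al.). The graph is a ladder of $2n+1$ columns alternating between $k$ and $k+1$ vertices (so it is planar with branchwidth $O(k)$ for trivial reasons), huge weights on the first and last columns force each of the $k$ regions to contain exactly one vertex of each and hence, by connectivity, to thread through every column; the one ``extra'' vertex in even column $2i$ carries weight $v_i$, and whichever region absorbs it is the bin receiving item $i$, so exact balance of the regions is literally the statement that every bin sums to $B$. A separate lemma (replacing a weight-$m$ vertex by $m-1$ pendant unit-weight dummies) then reduces the polynomially-weighted version to the unit-weight version. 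If you want to salvage your route, you would need to supply the full gadget arithmetic and the rigidity argument; otherwise the Bin Packing route gets you the same parameter bound ($k+w=O(k)$) with far less machinery.
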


In the theory of fixed-parameter tractability (see
e.g. Section 13.4 of~\cite{parameterized-algorithms}) this is strong
evidence that no algorithm exists with a running time of the form
$O(f(k,w) n^c)$ for fixed $c$ independent of $k$ and $w$.

This is bad news but there is a silver lining.  The lower bound guides
us in seeking good algorithms, and it does not rule out an algorithm
that has a running time of the form $f(k) n^{O(w)}$ or
$f(w)n^{O(k)}$.  That is, according to the theory, while there is no
algorithm that is fixed-parameter tractable with respect to both $k$
and $w$ simultaneously, there \emph{could} be one that is
fixed-parameter tractable with respect to $k$ alone and one that is
fixed-parameter tractable with respect to $w$ alone.

These turn out to be true.  Ito et al.~\cite{ItoZN,ItoGZN}
show that, even for general (not necessarily planar) graphs there is
an algorithm with running time $O((w+1)^{2(w+1)} U^{2(w+1)} k^2n)$,
where $U$ is the upper bound on the part weights.  Thus for unit
weights, the running time is $O((w+1)^{2(w+1)} n^{2w+3})$.
We can significantly reduce the exponent $2(w+1)$, and, by restricting
to planar graphs, can also reduce the constant.

However, for the application we have in mind this is not the bound to
try for.  In some real-world redistricting problems, the number $k$ of
districts is very small.\footnote{For example, in the U.S. there are
  seventeen states for which the number of districts is greater than
  one but no more than five.  In one of the most complex redistricting
  problems for France, the number of districts is three.}  The number
of atoms of course tends to be much larger, but the diameter of the
graph is in some cases not so large.  Thus we need an algorithm that
can tolerate a very small number $k$ of districts and a moderately
small branchwidth $w$.

\begin{theorem} \label{thm:algorithms} For the optimization problem
  and the sampling problem, there are algorithms that run in
  $O(c^w u^ksn(\log U +\log S) +n^3)$ time, where $c$ is a constant, $k$ is the
  number of districts, $w\geq k$ is an upper bound on the branchwidth
  of the planar graph, $n$ is the number of vertices of the graph,
  $U$ is the upper bound on the weight of a part, and $S$ is an upper
  bound on the cost of a desired solution.
\end{theorem}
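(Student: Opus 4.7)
The plan is to solve both problems by dynamic programming over a branch decomposition of the planar graph $G$. First, compute a branch decomposition of width $w$ in $O(n^3)$ time using the algorithm of Seymour--Thomas (specialized to planar graphs by Gu--Tamaki); this accounts for the additive $n^3$ term. Each edge $e$ of the decomposition tree corresponds to a separator $Z_e\subseteq V(G)$ of size at most $w$, and the leaves biject with the edges of $G$, so we process the graph bottom-up, combining two subproblems at each internal node.

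At each separator the DP table is indexed by a boundary state of the form $(\pi,\lambda,\bar w,t)$, where $\pi$ is a partition of $Z_e$ whose blocks correspond to connected pieces of not-yet-completed districts meeting $Z_e$, $\lambda$ labels each block by one of the $k$ districts or by a marker ``closed'', $\bar w=(w_1,\ldots,w_k)$ with $0\le w_i<U$ records the weight already assigned to each district, and $t\in\{0,\ldots,S\}$ is the accumulated cut cost. For the optimization problem we store the minimum cost realizing the structural part of the state; for the sampling problem we store the \emph{number} of partial solutions realizing it. A naive enumeration is bounded by the Bell number $B_w$ times $k^w$, which is $2^{O(w\log w)}$ and therefore too large. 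To bring the structural factor down to $c^w$ I would apply the rank-based representative-set technique of Bodlaender, Cygan, Kratsch and Nederlof slicewise for each fixed $(\bar w,t)$, keeping only a representative subset of partitions of size $2^{O(w)}$ that preserves future extendibility, and verify that this single-exponential bound is preserved under the join operation at internal nodes. The total table size is then $c^w\cdot U^k\cdot S$ per node, giving $O(c^w\,U^k\,S\,n\,(\log U+\log S))$ total DP time (the log factor accounts for arithmetic on weights and costs).

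For the sampling problem I would run the same DP but compute the number of partial solutions at each state instead of a minimum cost. To output a uniformly random cost-$C$ partition, descend top-down through the branch decomposition, at each internal node choosing a compatible pair of child states with probability proportional to the product of their stored counts, and restricting at the root to global states having exactly $k$ closed blocks with $w_i\in[L,U)$ and $t=C$. The subtlety is that rank-based compression discards states, so each surviving representative must carry the \emph{total} count of all states it represents; the linear-algebraic coefficients used by the rank-based reduction provide exactly this bookkeeping, and the top-down sampler using the aggregated counts then draws from the uniform distribution on solutions.

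The main obstacle I anticipate is combining the single-exponential bound $c^w$ in the branchwidth with the labeled, $k$-block, weight-constrained nature of the partitions here: rank-based techniques were originally designed for problems with a single global connectivity requirement (Steiner tree, Hamiltonian cycle, feedback vertex set), and the extension to $k$ simultaneously growing connected parts with bounded part weights requires arguing that the $k$ labels and the $(\bar w,t)$ coordinates factor cleanly, so that the rank inequality can be applied on each slice without interaction across slices. Without this compression the structural factor degrades to $(kw)^{O(w)}$, which would not match the theorem. Verifying the per-slice rank argument, together with checking that the counts required for sampling are preserved through each compression step, is the part of the proof that will require the most care.
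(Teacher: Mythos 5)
Your overall architecture (dynamic programming over a width-$w$ branch decomposition computed in $O(n^3)$ time, with states recording a boundary partition plus a weight vector and an accumulated cost, and with counts stored for sampling) matches the paper's, but the way you obtain the single-exponential factor $c^w$ is genuinely different and does not work for this theorem. The paper exploits planarity: it uses a \emph{sphere-cut} decomposition, so the boundary of every cluster lies on a closed curve (a cycle of the radial graph), and connectivity of the parts forces every induced boundary partition to be \emph{noncrossing}. The number of noncrossing partitions of $w$ elements is Catalan-like, i.e.\ $O(c_1^w)$ (Lemma~\ref{lem:noncrossing-partitions}), so the full, uncompressed table already has only $c^wU^kS$ entries and exact counts can be kept for every state. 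Your route instead compresses a Bell-number-sized state space with rank-based representative sets. That is plausible for the optimization problem, but it is fatal for the sampling problem: representative-set reduction only guarantees that for every future complement \emph{some} surviving state is compatible and attains the optimal value; it does not preserve, for each complement, the \emph{number} of partial solutions extending to it. Your proposed fix --- having each surviving representative ``carry the total count of all states it represents'' --- cannot be made consistent, because two partial solutions merged into one representative may be extendible by different (and differently sized) sets of complements, so no single aggregated count makes the top-down sampler uniform. Since the theorem claims a uniform-sampling algorithm, this is a genuine gap, and the noncrossing-partition observation is the missing idea that closes it (it also handles your worry about the $k$ labels: the paper avoids labels altogether by ordering the parts of the joined boundary partition canonically via smallest-ID representatives and indexing the weight vector by that order, whereas an explicit labeling contributes a $k^w$ factor that is not $c^w$ when $k$ grows with $w$).

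A second, smaller gap is in the running-time accounting. Writing ``the total table size is $c^w\cdot U^k\cdot S$ per node, giving $O(c^w\,U^k\,S\,n\,(\log U+\log S))$ total DP time'' skips the cost of the join at an internal node: combining the weight/cost coordinates of the two children is a convolution (one sums over all ways to split $\boldsymbol{w}_0$ into $\boldsymbol{w}_1+\boldsymbol{w}_2$ and $s_0$ into $s_1+s_2$), and enumerating compatible pairs naively costs $(U^kS)^2$ per triple of boundary partitions, which overshoots the claimed bound. The paper obtains the stated time by encoding each slice as a multivariate polynomial in $k+1$ variables and multiplying via multidimensional FFT in $O(N\log N)$ time with $N=U^kS$; that FFT step, not ``arithmetic on weights and costs,'' is where the $(\log U+\log S)$ factor comes from, and you would need it (or an equivalent convolution trick) to complete the proof.
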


\textbf{Remarks:}
\begin{enumerate}
  \item  In the unit-cost case (every edge cost is one), $S \leq n$. 
\item In the unit-weight, unit-cost case, the running time is $O(c^w n^{k+2}\log n)$.
\item For practical use the input weights need not be the
populations of the atoms; if approximate population is acceptable, the
weight of an atom with population $p$ can be, e.g., $\lceil p/100 \rceil$.
\item The $n^3$ term in the running time accounts for the time required to find
the branchwidth of the input graph; as we will see, that is usually unnecessary in
our application to redistricting.
\end{enumerate}

An implementation of the optimization algorithm has been developed;
another paper will describe the implementation and report on experiments in which the algorithms are applied to
real (but small) redistricting problems.

\section{Preliminaries}

\subsection{Branchwidth} \label{sec:branchwidth}

A \emph{branch decomposition} of a graph $G$ is a rooted binary tree with the
following properties:
\begin{enumerate}
\item Each node $x$ is labeled with a subset $C(x)$ of the edges of $G$.
\item The leaves correspond to the edges of $G$: for each edge $e$,
  there is a leaf $x$ such that $C(x)=\set{e}$.
  \item For each node $x$ with children $x_1$ and $x_2$, $C(x)$ is the
    disjoint union of $C(x_1)$ and $C(x_2)$.
  \end{enumerate}
We refer to a set $C(x)$ as a \emph{branch cluster}.  A vertex $v$ of
$G$ is a \emph{boundary vertex} of $C(x)$ if $G$ has at least one edge
incident to $v$ that is in $C(x)$ and at least one edge incident to
$v$ that is not in $C(x)$.  The \emph{width} of a branch cluster is
the number of boundary vertices, and the width of a branch
decomposition is the maximum cluster width.  The branchwidth of a
graph is the minimum $w$ such that the graph has a branch
decomposition of width $w$.  

For many optimization problems in graphs, if the input graph is
required to have small branchwidth then there is a fast algorithm,
often linear time or nearly linear time, and often this algorithm can
be adapted to do uniform random sampling of solutions.  Therefore Najt, Deford, and
Solomon~\cite{NDSolomon} had good reason to expect that there would be
a polynomial-time algorithm to sample from balanced partitions where
the degree of the polynomial was independent of $w$ and $k$.

\subsection{Radial graph} \label{sec:radial}

For a planar embedded graph $G$, the radial graph of $G$ has a node
 for every vertex of $G$ and a node for
every face of $G$, and an edge between a vertex-node and a
face-node if the vertex lies on the face's boundary.  Note that
the radial graph of $G$ is isomorphic to the radial graph of the
dual of $G$.  There is a linear-time algorithm that, given a
planar embedded graph $G$ and a node $r$ of the radial graph, returns a
branch decomposition whose width is at most the number of hops
required to reach every node of the radial graph from $r$ (see,
e.g.,~\cite{planarity}).  For example, Figure~\ref{fig:radial}
shows that the number of hops required is at most six, so the
linear-time algorithm would return a branch decomposition of width
$w$ at most six.

Using this result, some real-world redistricting graphs can be
shown to have moderately small branchwidth.  For example,
Figure~\ref{fig:France} shows a department of France, Sarthe, that
will need to be divided into $k=3$ districts.  The number of hops
required for this example is six, so we would get a branch
decomposition of width $w$ at most six.
    
\subsection{Sphere-cut decomposition}
\label{sec:sphere-cut}

The branch decomposition of a planar embedded graph can be assumed to
have a special form.  The radial graph of $G$
can be drawn on top of the embedding of $G$ so that a face-node is
embedded in the interior of a face of $G$ and a vertex-node is
embedded in the same location as the corresponding vertex.  We can
assume that the branch decomposition has the property that
corresponding to each branch cluster $C$ is a cycle in the radial
graph that encloses exactly the edges belonging to the cluster $C$,
and the vertices on the boundary of this cluster are the vertex-nodes
on the cycle.  This is called a \emph{sphere-cut
  decomposition}~\cite{DornPBF10}.  If the branch decomposition is
derived from the radial graph using the linear-time algorithm
mentioned above, the sphere-cut decomposition comes for free.
Otherwise, there is an $O(n^3)$ algorithm to find a given planar graph's least-width
branch decomposition, and if this algorithm is used it again gives a
sphere-cut decomposition.

\section{Related work}

There is a vast literature on partitioning graphs, in particular on
partitions that are in a sense balanced.  In particular, in the area
of decomposition of planar graphs, there are algorithms~\cite{Rao87,ParkPhillips,Rao92} for
\emph{sparsest cut} and \emph{quotient cut}, in which the goal is
essentially to break off a single piece
such that the cost of the cut is small compared to the amount of
weight on the smaller side.  The single piece can be required to be connected.
There are approximation algorithms for variants of balanced
partition~\cite{GargSV99,FoxKM15} into two pieces.  These only address
partitioning into $k=2$ pieces, the pieces are not necessarily
connected, and the balance constraint is only approximately satisfied;

There are many papers on algorithms relevant to computer-aided
redistricting (a few examples
are~\cite{helbig1972political,hess,mehrotra_optimization_1998,garfinkel_optimal_1970}),
including papers that have appeared at SIGSPATIAL~\cite{EppsteinGKM17,Cohen-AddadKY18}.

Finally, there many papers on $W[1]$-hardness and more generally
lower bounds on fixed-parameter tractability, as this is a
well-studied area of theoretical computer science.
Our result is somewhat rare in 
that most graph problems are fixed-parameter tractable with respect to
branchwidth/treewidth.  However, there are by now other
$W[1]$-hardness results with respect to
treewidth~\cite{DBLP:conf/iwpec/DomLSV08,DBLP:conf/iwpec/BodlaenderLP09,DBLP:journals/iandc/FellowsFLRSST11,DBLP:conf/approx/MarxSS16,DBLP:journals/siamdm/GutinJW16,DBLP:journals/corr/GuptaSZ17} and a few
results~\cite{DBLP:conf/iwpec/BodlaenderLP09,DBLP:journals/algorithmica/FeldmannM20} were previously known
even under the restriction that the input graph must be planar.

\section{W[1]-Hardness}
In this section, we show that the problem is W[1]-hard parameterized by $k+w$,
where $k$ is the number of districts and $w$ the treewidth of the graph.

We start with the following lemma that shows that it is enough to prove
that a more structure version of the problem (bounded vertex weights, each region
must have size greater than 1) is W[1]-hard.
\begin{lemma}
  \label{lem:unweighted}
  If the planar vertex-weighted version of the \problem~is W[1]-hard parameterized
  by $k+w$ when the total weight of each region should be greater than 1, and
  the smallest weight is 1 and the largest weight is $|V|^{c}$ for some constant $c$,
  then the planar unweighted version of the problem is W[1]-hard parameterized
  by $k+w$.
\end{lemma}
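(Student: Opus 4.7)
The plan is to reduce from the weighted bounded-weight version to the unweighted version via a ``pendant blow-up'' construction. Given a planar weighted instance $(G, w, k, [L, U))$ satisfying the hypothesis (so in particular $L \geq 2$, since every region must have total weight greater than $1$, and $1 \leq w_v \leq |V|^c$ for every vertex $v$), build an unweighted planar graph $G'$ by keeping all vertices and edges of $G$ and, for each vertex $v$, attaching $w_v - 1$ fresh pendant (degree-$1$) vertices $p_v^1, \ldots, p_v^{w_v - 1}$ to $v$. Give every vertex of $G'$ unit weight, and pass the same $k$ and the same target interval $[L, U)$ to the unweighted instance.

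Next I would establish a cost- and weight-preserving bijection between valid partitions. Let $B_v = \{v, p_v^1, \ldots, p_v^{w_v - 1}\}$ be the bag of size $w_v$ at $v$. The key observation is that because $L \geq 2$, no part of a valid partition of $G'$ can be a single vertex; since each pendant has only $v$ as a neighbor in $G'$, any connected set of size $\geq 2$ containing $p_v^i$ must also contain $v$. Hence every bag is entirely contained in one part, and the induced assignment of bags to parts is a connected $k$-partition of $V(G)$ whose part-weights equal the sizes of the corresponding parts in $G'$, and so lie in $[L, U)$. The converse direction is immediate: assign each pendant to the part of its owner. Pendant edges are always internal to a bag, so the cut-edge cost is preserved edge for edge.

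Then I would verify that the parameters $k + w$ are preserved. The graph $G'$ is planar and has at most $|V|^{c+1}$ vertices, so the reduction is polynomial. The number of districts $k$ is unchanged. For branchwidth, take any branch decomposition of $G$ of width $w$ and, for each new pendant edge $vp_v^i$, locate an existing leaf in the decomposition corresponding to an edge of $G$ incident to $v$ and replace that leaf by a small binary subtree whose two leaves are the old edge and the new pendant edge. A direct check of the definition shows that no branch cluster acquires a new boundary vertex: the new leaf $p_v^i$ has degree $1$ in $G'$, so it is never a boundary vertex; and $v$'s boundary status at every ancestor cluster is determined by whether $v$ has incident edges on both sides of the cluster, which is unchanged by adding a single pendant edge on the side where $v$ already has an edge. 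Isolated vertices of $G$ and vertices with $w_v = 1$ are handled trivially.

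The main obstacle is the combination of the branchwidth preservation step and the need to ensure pendants cannot be partitioned away from their owners. The latter is exactly what the assumption that every region have total weight greater than $1$ is engineered to enforce; the former is standard but must be checked against the definition. Once these are in place, any parameterized reduction witnessing $W[1]$-hardness of the weighted problem with respect to $k + w$ composes with our polynomial-time reduction to witness $W[1]$-hardness of the unweighted problem with respect to $k + w$.
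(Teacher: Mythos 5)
Your proof is correct and follows essentially the same route as the paper's: attach $w_v-1$ unit-weight pendant (dummy) vertices to each vertex $v$, observe planarity and the $|V|^{c+1}$ size bound, and use the hypothesis that every region must have weight greater than $1$ together with connectivity to force each pendant into its owner's part, giving the desired correspondence of solutions. You are in fact somewhat more careful than the paper in one respect: you explicitly verify that attaching pendant edges does not increase the branchwidth, a step the paper's proof leaves implicit.
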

\begin{proof}
  Consider the following transformations of a vertex-weighted 
  instance of the \problem. First, rescale all the weights of vertices
  by a factor $W/w$ where $W$ is the largest vertex weight and $w$ is the
  largest vertex weight.
  For each vertex $v$ of weight $w(v)$, create
  $w(v)-1$ unit-weight \emph{dummy} vertices and connect each of them to $v$
  with a single edge, then remove the weight of $v$.

  This yields a unit-weight graph which satisfies the following properties.
  First, if the input graph was planar, then the resulting
  graph is also planar.
  Second, since the ratio $W/w$ is at most $|V|^c$, the total number of
  vertices in the new graph is at most $|V|^{c+1}$.
  Finally, any solution for the \problem~ on the vertex-weighted graph
  can be associated to a solution for the \problem~ on the unit-weight graph:
  for each vertex $v$ of the original graph, assign each of the $w(v)-1$ dummy
  vertices to the same region as $v$. We have that the associated solution
  has connected regions of exactly the same weight as the solution in
  the weighted graph. Moreover, we claim that any solution for the unit-weight
  graph is associated to a solution of the input weighted graph: this follows from
  the assumption that the prescribed weights for the regions is greater than 1
  and that the regions must be connected. Thus for each vertex $v$,
  in any solution all the $w(v)-1$ dummy vertices must belong to the region of $v$.

  Therefore, if the planar vertex-weighted version of the \problem~ is W[1]-hard parameterized
  by $k+w$ when the smallest weight is at least 1, the total weight of each region should be greater
  than 1, and the total weight of the graph is at most $|V|^{c}$ for some constant $c$,
  then the planar unit-weight version of the problem is W[1]-hard parameterized
  by $k+w$.
\end{proof}

By Lemma~\ref{lem:unweighted}, we can focus without loss of generality
on instances where the vertices have weights between 1 and $|V|^c$ for some fixed
constant $c$. We next show that the problem is W[1]-hard on these instances.

We reduce from the Bin Packing problem with polynomial weights.
Given a set of integer values $v_1,\ldots,v_n$ and two integers $B$ and $k$,
the \emph{Bin Packing} problem asks to decide whether there exists a
partition of $v_1,\ldots,v_n$ into
$k$ parts such that for each part of the partition, the sum of the values is at most
$B$.
The Bin Packing problem with polynomially bounded weights assumes that
there exists a constant $c$ such that $B = O(n^{c})$. Note that for the case
where the weights are polynomially bounded, we can assume w.l.o.g. that the sum
of the weights is exactly $kB$ by adding $kB - \sum_{i=1}^n v_i$ elements of value 1.
Since the weights are polynomially bounded and that each weight is integer we have
that (1) the total number of new elements added is polynomial in $n$, hence the size
of the problem is polynomial in $n$, and (2) there is
a solution to the original problem if and only if there is a solution to the new problem:
the new elements can be added to fill up the bins that are not full in the solution of the
original problem.

We will make use of the following theorem of Jansen et al.~\cite{JansenKMS13}.
\begin{theorem}[\cite{JansenKMS13}]
  The Bin Packing problem with polynomial weights is W[1]-hard parameterized
  by the number of bins $k$.
  Moreover, there is no $f(k)n^{o(k/log k)}$ time algorithm assuming the
  exponential time hypothesis (ETH).
\end{theorem}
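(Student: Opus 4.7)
The plan is to reduce from Multicolored Clique, which by Chen, Huang, Kanj, and Xia is $W[1]$-hard parameterized by the number of colors $k'$ and admits no $f(k')n^{o(k')}$-time algorithm under ETH. The target will be a Bin Packing instance with $k = O(k'\log k')$ bins and item weights bounded by a polynomial in $n$, so that an $f(k) n^{o(k/\log k)}$-time algorithm for Bin Packing would yield a forbidden $g(k') n^{o(k')}$-time algorithm for Multicolored Clique and collapse ETH. The reduction must therefore be ``tight'' in two senses: it introduces only an $O(\log k')$ multiplicative overhead on the parameter, and it keeps all weights polynomial so that we land in the correct version of Bin Packing.

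The construction is a positional encoding in a polynomially-large base $b$. For each color class I introduce one \emph{vertex item} per vertex of that color, and for each pair of color classes one \emph{edge item} per edge between them. Item weights are concatenations of digit blocks in base $b$, each block reserved for a specific consistency check, such as ``the vertex item chosen for color $i$ is an endpoint of the edge item chosen between colors $i$ and $j$.'' All bins share a single capacity target, designed so that exactly reaching the target forces one vertex item per color class and one edge item per pair of colors, with matching identifiers in the coupled blocks. Small slack items are added to absorb residuals and balance the bookkeeping.

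The main obstacle, and the source of the $\log k$ saving, is multiplexing several consistency checks into one bin without losing rigidity. A one-bin-per-constraint design would use $\Theta({k'}^2)$ bins and recover only an $n^{o(\sqrt{k})}$ lower bound; the sharper $n^{o(k/\log k)}$ bound requires packing $\Theta(\log k')$ independent constraints into each bin via disjoint digit blocks. The base $b$ must be chosen large enough to prevent any carry from propagating between blocks (preserving rigidity), yet small enough that item weights stay polynomial in $n$. The delicate part is a case analysis showing that every packing reaching the target capacity corresponds to a bona fide multicolored clique: any off-target choice forces an unrecoverable overflow or deficit in some block. Once this rigidity is proven, $W[1]$-hardness follows from the existence of a parameter-preserving reduction, and the ETH lower bound follows by propagating the $k' \to O(k'\log k')$ blowup through the Chen--Huang--Kanj--Xia lower bound for Clique.
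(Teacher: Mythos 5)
First, a framing point: the paper does not prove this statement at all --- it is imported as a black box from Jansen, Kratsch, Marx, and Schlotter \cite{JansenKMS13}, so there is no in-paper proof to compare against. What you have written is an attempt to re-derive a cited result, and judged on its own terms the sketch has concrete gaps rather than merely missing details.

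The central problem is the parameter accounting, and it is not locally patchable. A Multicolored Clique instance with $k'$ colors carries $\Theta(k'^2)$ pairwise consistency constraints (one per pair of color classes). Packing $\Theta(\log k')$ of them into each bin, as you propose, yields $\Theta(k'^2/\log k')$ bins, not $O(k'\log k')$; with $k=\Theta(k'^2/\log k')$, an $f(k)\,n^{o(k/\log k)}$ algorithm for Bin Packing only gives roughly an $n^{o(k'^2/\log^2 k')}$ algorithm for Clique, which contradicts nothing. To reach $k=O(k'\log k')$ you would need $\Theta(k'/\log k')$ constraints per bin, and you supply no mechanism for making that many couplings rigid simultaneously. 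This is not an accident: the $k/\log k$ exponent is the signature of Marx's ETH lower bound for Partitioned Subgraph Isomorphism parameterized by the number of pattern edges, and the standard route (and, to my knowledge, the route actually taken in \cite{JansenKMS13}) is to reduce from that problem using $O(1)$ bins per pattern edge, so that the $\log$ loss is inherited rather than manufactured. Starting from Multicolored Clique and somehow avoiding the quadratic blowup in constraints would essentially amount to re-proving Marx's expander-embedding theorem inside your gadget. A second, independent gap is the weight bound: with one base-$b$ digit block per constraint and $b=\mathrm{poly}(n)$ to suppress carries, item weights are $b^{\Theta(t)}$ where $t$ is the number of blocks; if $t$ grows with $k'$ the weights are $n^{\Theta(k')}$ or worse, violating the ``polynomially bounded weights'' hypothesis (a fixed $n^c$) and breaking the size bound required of an FPT reduction. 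You note this tension but the naive disjoint-block encoding cannot resolve it. Finally, the soundness direction (every exact packing decodes to a clique) is asserted rather than argued. If you want a provable route, reduce from Partitioned Subgraph Isomorphism, or simply cite \cite{JansenKMS13} as the paper does.
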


We now proceed to the proof of Theorem~\ref{thm:hardness}.
From an instance of Bin Packing with polynomially bounded weights and whose
sum of weights is $kB$, create
the following instance for the \problem.
For each $i \in [2k+1]$, create 
$$\ell_i =\begin{cases} k &\mbox{if $i$ is odd}\\
k+1 &\mbox{if $i$ is even}
\end{cases}
$$
vertices $s_i^1,\ldots,s_i^{\ell_i}$. Let $S_i = \{s_i^1,\ldots,s^{\ell(i)}_i\}$.
Moreover, for each odd $i < n$, for each $1 \le j \le k$, connect
$s^j_i$ to $s^j_{i-1}$ and $s^j_{i+1}$, and when $j < k$, also to $s^{j+1}_{i-1}$ and $s^{j+1}_{i+1}$.
Let $G$ be the resulting graph.

It is easy to see that $G$ is planar. We let 
$f_{\infty}$ be the longest face:\\
$\{s^1_1,\ldots,s^k_1, s^{k+1}_2, s^k_3, \ldots,s^{k}_{2n+1}, s^{k-1}_{2n+1}, \ldots, s^{1}_{2n+1}, s^{1}_{2n},\ldots,s^{1}_2\}$.\\
We claim that the treewidth of the graph is at most $7k$. To show this
we argue that the face-vertex incidence graph $\bar{G}$ of $G$ has
diameter at most $2k+4$ and by Lemma~\ref{lem:unweighted} this immediately yields
that the treewidth of $G$ is at most $10k$. We show that each vertex of $\bar{G}$
is at hop-distance at most $k+2$ of the vertex corresponding to $f_{\infty}$.
Indeed, consider a vertex $s^j_i$ (for a face, consider a vertex $s^j_i$ on that face).
Recall that for each $i_0,j_0$, we have that $s^{j_0}_{i_0}$ is adjacent to $s^{j_0}_{i+1}$
and $s^{j_0+1}_{i+1}$ and so, 
$s^j_i$ is at hop-distance at most $k+1$ from either $s^{\ell(i)}_i$ or $s^1_i$
in $\bar{G}$. Moreover both $s^1_i$ and $s^{\ell(i)}_n$ are on face $f_{\infty}$ and so
$s^j_i$ is at hop-distance at most $k+2$ from $f_{\infty}$ in $\bar{G}$.
Hence the treewidth of $G$ is at most $10k$.

Our next step is to assign weights to the vertices.
Then, we set the weight $w(s^j_i)$ of every vertex $s^j_i$ of $\{s^1_1,\ldots,s^k_1\}$ to be
$(kB)^2$ and the weight $w(s^j_i)$ of every vertex $s^j_i$ of $\{s^1_{2n+1},\ldots,s^{k}_{2n+1}\}$
to be $(kB)^4$. For each odd $i \neq 1,2n+1$ we set a weight of $1/(2n-2)$.
Finally, we set the weight of each vertex $s^{j}_i$ where $i$ is even to be
$v_i$.
Let $T = (kB)^2 + (kB)^4 + 1/2 + kB$, and recall that $kB = \sum_{i=1}^n v_i$.

\begin{fact}
  \label{fact:weights}
  Consider a set $S$ of vertices containing exactly one vertex of $S_i$ for each $i$.
  Then the sum of the weights of the vertices in $S$ is $T$.
\end{fact}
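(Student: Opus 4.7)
The plan is to verify the claim by a direct computation, partitioning the contribution to the sum according to the four weight classes described in the construction. Note first that the index set $\{1,2,\ldots,2n+1\}$ splits naturally as: the single odd index $i=1$; the single odd index $i=2n+1$; the $n-1$ remaining odd indices $i\in\{3,5,\ldots,2n-1\}$; and the $n$ even indices $i\in\{2,4,\ldots,2n\}$. Since $S$ contains exactly one vertex from each $S_i$, it suffices to sum the weights of one representative from each $S_i$ and group these contributions by class.

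First I would read off the first two contributions immediately from the construction: the single vertex chosen from $S_1$ contributes weight $(kB)^2$ (every vertex of $S_1$ has this weight), and the single vertex chosen from $S_{2n+1}$ contributes weight $(kB)^4$. Next I would handle the remaining odd indices: there are $n-1$ of them, each $S_i$ consists of vertices of weight $1/(2n-2)$, and so the total contribution from this class is $(n-1)\cdot \tfrac{1}{2n-2} = \tfrac12$. Finally, for the even indices $i\in\{2,4,\ldots,2n\}$, each vertex of $S_i$ carries weight $v_i$ (under the intended indexing matching the $n$ items of the Bin Packing instance), so the total contribution is $\sum_{i} v_i = kB$, using the earlier padding that makes the weight sum exactly $kB$.

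Adding the four pieces together yields
\[
(kB)^2 + (kB)^4 + \tfrac{1}{2} + kB \;=\; T,
\]
which is the claimed identity. The only step requiring any care is the bookkeeping of how many odd indices lie strictly between $1$ and $2n+1$, so that the factor $1/(2n-2)$ lines up with the $n-1$ terms to produce exactly $1/2$; this is the reason the construction chose that specific small weight. No nontrivial obstacle arises, since the identity is, by design, forced to hold independently of the choice of representative in each $S_i$ — which is precisely the structural property of the gadget that will later be exploited in the reduction.
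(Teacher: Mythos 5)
Your computation is correct and is exactly the verification the paper intends (the paper states this as a Fact without proof, and $T$ is defined precisely as the sum you obtain): one representative from each $S_i$ contributes $(kB)^2$ from $S_1$, $(kB)^4$ from $S_{2n+1}$, $(n-1)\cdot\frac{1}{2n-2}=\frac12$ from the interior odd rows, and $\sum_i v_i = kB$ from the even rows, independently of which representative is chosen since all vertices within a given $S_i$ share the same weight. Your reading of the (slightly garbled) indexing — that row $2i$ carries weight $v_i$ so the even rows range over all $n$ items — is the intended one.
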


We now make the target weight of each region to be $(kB)^2 + (kB)^4 + kB + B = T+B$,
We have the following lemma.
\begin{lemma}
  \label{lem:firstlast}
  In any feasible solution to the \problem, there is exactly 1 vertex of
  $\{s^1_1,\ldots,s^k_1\}$ and exactly 1 vertex of
  $\{s^1_n,\ldots,s^{\ell(n)}_n\}$ in each region.
\end{lemma}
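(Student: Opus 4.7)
The plan is to exploit the fact that the weights assigned to the ``first layer'' $S_1$ and to the ``last layer'' $S_{2n+1}$ are of vastly different orders of magnitude than the other weights in $G$: vertices in $S_{2n+1}$ weigh $(kB)^4$, those in $S_1$ weigh $(kB)^2$, and every other vertex weight is at most $B + O(1)$. Combined with the tight target weight $T + B = (kB)^4 + (kB)^2 + kB + B$ per region and a simple pigeonhole argument, this should force the claimed structure. (I assume throughout that ``$\{s^1_n,\ldots,s^{\ell(n)}_n\}$'' in the statement refers to the last layer $S_{2n+1}$, whose existence is the only thing the construction provides.)

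First I would show that no feasible region can contain two vertices of $S_{2n+1}$: such a region would have weight at least $2(kB)^4$, which exceeds the target $T + B$ by roughly $(kB)^4$---far larger than any permitted slack in the part-weight interval. Since $|S_{2n+1}| = \ell_{2n+1} = k$ (because $2n+1$ is odd) and there are exactly $k$ regions, pigeonhole forces each region to contain \emph{exactly} one vertex of $S_{2n+1}$.

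Next I would apply the same reasoning to $S_1$. Once each region already contains its unique $S_{2n+1}$-vertex contributing weight $(kB)^4$, a region containing two vertices of $S_1$ would have weight at least $(kB)^4 + 2(kB)^2$, again exceeding $T + B$ by roughly $(kB)^2 - kB - B > 0$ for any nontrivial $k$ and $B$. Since $|S_1| = \ell_1 = k$ equals the number of regions, pigeonhole finishes the argument.

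The main obstacle I anticipate is purely bookkeeping: I must be precise about the permitted part-weight interval and verify that the gap separating ``one heavy vertex'' from ``two heavy vertices'' in each layer strictly exceeds the total weight that could be amassed from the lighter $v_i$-weighted (even-index) and $1/(2n-2)$-weighted (odd-index interior) vertices. Because the total weight of all vertices outside $S_1 \cup S_{2n+1}$ is only $kB + 1/2$, and because $(kB)^2$ dominates this by a polynomial factor, all required inequalities hold comfortably, so no delicate case analysis is needed and connectivity of the regions plays no role in this particular lemma.
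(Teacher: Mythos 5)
Your proposal is correct and follows essentially the same argument as the paper: the target region weight $T+B=(kB)^4+(kB)^2+kB+B+1/2$ together with the fact that there are exactly $k$ vertices of weight $(kB)^4$ and exactly $k$ of weight $(kB)^2$ (and that all remaining weight totals far less than $(kB)^2$) forces each region to contain exactly one vertex from each of the two heavy layers by pigeonhole. Your version is in fact slightly more careful than the paper's terse proof, since you explicitly order the argument (first $S_{2n+1}$, then $S_1$) and verify the separating inequalities, but the underlying idea is identical.
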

\begin{proof}
  Recall that by definition we have that
  $\sum_{i=1}^n v_i = k B$. Moreover, the number
  of vertices with weight equal to $(kB)^2$ is exactly $k$.
  Thus, since the target weight of each region is $(kB)^2 + (kB)^4 + B + kB$,
  each region has to contain exactly 1 vertex from 
  $\{s^1_1,\ldots,s^k_1\}$
  and exactly 1 vertex from
  $\{s^1_n,\ldots,s^{\ell(n)}_n\}$.
\end{proof}

We now turn to the proof of completeness and soundness of the reduction.
We first show that if there exists a solution to the Bin Packing instance, namely that
there is a partition into $k$ parts such that for each part of the partition,
the sum of the values is $B$, then there exists a feasible solution
to the problem.
Indeed, consider a solution to the Bin Packing instance $\{B_1,\ldots,B_k\}$
and construct the following solution to the \problem.
For each odd $i$, assign vertices $s^1_i,\ldots,s^k_i$ to regions $R_1,\ldots,R_k$ respectively.
For each $i \in [n]$, perform the following assignment for the even rows. Let $u_i$ be the integer in $[k]$
such that $v_{i} \in B_{u_i}$.
Assign all vertices $s^{1}_{2i},\ldots,s^{u_i-1}_{2i}$ to regions $R_1,\ldots R_{u_i-1}$ respectively. Assign
both vertices $s^{u_i}_{2i}$ and $s^{u_i+1}_{2i}$ to region $R_{u_i}$. Assign
all vertices $s^{u_i+2}_{2i}, \ldots s^{k+1}_{2i}$ to regions $R_{u_i+1},\ldots R_k$.
The connectivity of the regions follows from the fact that for each odd $i$,
$s^{j}_i$ is connected to both $s^{j}_{i+1}$ and $s^{j+1}_{i+1}$ and to
both $s^{j}_{i-1}$ and $s^{j+1}_{i-1}$.

We then bound the total weight of each region. Let's partition the
vertices of a region $R_j$ into two: Let $S_{R_j}$ be a set
that contains one vertex from each  $S_i$ and
let $\bar{S_{R_j}}$ be the rest of the elements.
The total weight of the vertices in $S_{R_j}$ is by Fact~\ref{fact:weights}
exactly $T$.
The total weight of the remaining vertices corresponds to the sum of
the values $v_i$ such that $|R_j \cap S_i| = 2$ which is
$\sum_{v_i \in B_j} v_i = B$ since it is a solution to the Bin Packing problem.
Hence the total weight of the region is 
$T + B$, as prescribed by the \problem.

We finally prove that if there exists a solution for the \problem~with the prescribed region weights,
then there exists a solution to the Bin Packing problem. Let $R_1,\ldots,R_k$ be the solution
to the problem.
By Lemma~\ref{lem:firstlast}, each region contains one vertex of $s^{1}_1, \ldots s^{k}_1$ and one vertex
of $s^{1}_1, \ldots s^{k}_{2n+1}$. Since the regions are required to be connected, there exists
a path joining these two vertices and so by the pigeonhole principle
for each odd $i$, each region contains exactly
one vertex of $s^{1}_i, \ldots s^{k}_{i}$. Moreover for each even $i$, each region contains at
least one vertex of $s^{1}_i, \ldots s^{k+1}_{i}$ and exactly one region contains two vertices.
Let $\phi(i) \in [k]$ be such that $|R_{\phi(i)} \cap \{s^{1}_i, \ldots s^{k+1}_{i}\}|  = 2$.
We now define the following solution for the Bin Packing problem.
Define the $j$th bin as $B_j = \{v_i \mid \phi(i) = j\}$. We claim that for each bin $B_j$
the sum of the weights of the elements in $B_j$ is exactly $B$.
Indeed, observe that region $R_j$ contains exactly one vertex of $s^{1}_i, \ldots s^{k}_{i}$
for each odd $i$ and exactly one vertex of $s^{1}_i, \ldots s^{k+1}_{i}$ for each even $i$ except
for the sets $s^{1}_i, \ldots s^{k+1}_{i}$ where $\phi(i) = j$ for which it contains two vertices.
Thus by Fact~\ref{fact:weights}, the total sum of the weights is $T + \sum_{i \mid \phi(i) = j} v_i$ and
since the target weight is $T+B$ we have that $\sum_{i \mid \phi(i) = j} v_i = B$.
Since the weight of $B_j$ is exactly $\sum_{i \mid \phi(i) = j} v_i$ the proof is complete.

\section{Algorithm} \label{sec:alg}

In this section, we describe the algorithms of
Theorem~\ref{thm:algorithms}.  In describing the algorithm, we will
focus on simplicity rather than on achieving the best constant
possible as the base of $k$.  

\subsection{Partitions}

A \emph{partition} of a finite set $\Omega$ is a collection of
disjoint subsets of $\Omega$ whose union is $\Omega$.  A partition
defines an equivalence relation on $\Omega$: two elements are
equivalent if they are in the same subset.

There is a partial order on partitions of $\Omega$: $\pi_1 \prec
\pi_2$ if every part of $\pi_1$ is a subset of a part of $\pi_2$.
This partial order is a lattice.  In particular, for any pair
$\pi_1,\pi_2$ of partitions of $\Omega$, there is a unique minimal
partition $\pi_3$ such that $\pi_1 \prec \pi_3$ and $\pi_2 \prec
\pi_3$.  (By \emph{minimal}, we mean that for any partition $\pi_4$
such that $\pi_1 \prec \pi_4$ and $\pi_2 \prec \pi_4$, it is the case
that $\pi_3 \prec \pi_4$.)  This unique minimal partition is called
the \emph{join} of $\pi_1$ and $\pi_2$, and is denoted $\pi_1 \vee
\pi_2$.

It is easy to compute $\pi_1 \vee \pi_2$: initialize $\pi := \pi_1$,
and then repeatedly merge parts that intersect a common part of
$\pi_2$.

In a slight abuse of notation, we define the join of a partition
$\pi_1$ of one finite set $\Omega_1$ and a partition $\pi_2$ of
another finite set $\Omega_2$.  The result, again written $\pi_1 \vee
\pi_2$, is a partition of $\Omega_1 \cup \Omega_2$.  It can be defined
algorithmically: iniitalize $\pi$ to consist of the parts of $\pi_2$,
together with a singleton part $\set{\omega}$ for each $\omega\in
\Omega_2 - \Omega_1$.  Then repeatedly merge parts of $\pi$ that
intersect a common part of $\pi_2$.

\subsection{Noncrossing partitions}
\label{sec:partitions}

The sphere-cut decomposition is algorithmically useful because it
restricts the way a graph-theoretic structure (such as a solution) can
interact with each cluster.  For a cluster $C$, consider the
corresponding cycle in the radial graph, and let $\theta_C$ be the
cyclic permutation $(v_1\ v_2\ \cdots\ v_m)$ of boundary vertices in
the order in which they appear in the radial cycle.  (By a slight
abuse of notation, we may also interpret $\theta_C$ as the set
$\set{v_1, \ldots, v_m}$.

First consider a partition $\rho^\ins$
of the vertices incident to edges belonging to $C$, with the property
that each part induces a connected subgraph of $C$.  Planarity
implies that the partition induced by $\rho^\ins$ on the boundary vertices
$\set{v_1, \ldots, v_m}$ has a special property.

\begin{definition}
Let $\pi$ be a partition of the set $\set{1,\ldots, w}$.  We say $\pi$ is
\emph{crossing} if there are integers $a<b<c<d$ such that one part
contains $a$ and $c$ and another part contains $b$ and $d$.
\end{definition}

It follows from connectivity that the partition induced by $\rho^\ins$
on the boundary vertices $\theta_C$ is a noncrossing partition.
Similarly, let $\rho^\out$ be a partition of the vertices incident to
edges that do \emph{not} belong to $C$; then $\rho^\out$ induces a
noncrossing partition on the boundary vertices of $C$.

The asymptotics of the Catalan numbers imply the following (see,
e.g.,~\cite{DornPBF10}).
\begin{lemma} \label{lem:noncrossing-partitions} There is a constant
  $c_1$ such that the number of noncrossing partitions of $\set{1,
    \ldots, w}$ is $O(c_1^w)$.
\end{lemma}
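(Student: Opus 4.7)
The plan is to identify the count $N(w)$ of noncrossing partitions of $\{1,\ldots,w\}$ with the Catalan number $C_w = \frac{1}{w+1}\binom{2w}{w}$, and then invoke the standard exponential asymptotic $C_w = O(4^w)$ coming from Stirling's formula. This gives the lemma with, for example, $c_1 = 4$ (any constant strictly greater than $4$ also works, with polynomial slack absorbed into the big-$O$).

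To show $N(w) = C_w$, I would establish the Catalan recurrence by conditioning on the block $B$ that contains the element $1$. Let $k$ be the smallest element of $B$ other than $1$, or set $k = w+1$ in case $1$ is a singleton. By the noncrossing property, every block other than $B$ that meets $\{2, \ldots, k-1\}$ must lie entirely inside $\{2, \ldots, k-1\}$; otherwise such a block would straddle $\{1, k\}$ and cross with $B$. Hence $\{2,\ldots,k-1\}$ is partitioned independently as a noncrossing partition, while the elements $\{k, k+1, \ldots, w\}$ — together with the element $1$ absorbed into whichever block contains $k$ — form an independent noncrossing partition on a set of size $w-k+1$. Summing over $k$ and reindexing $j = k-2$ collapses to the standard Catalan convolution $N(w) = \sum_{j=0}^{w-1} N(j)\, N(w-1-j)$ with base case $N(0) = 1$, which pins down $N(w) = C_w$.

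With $C_w$ in hand, Stirling's formula gives $\binom{2w}{w} \sim 4^w/\sqrt{\pi w}$ and hence $C_w = O(4^w/w^{3/2}) = O(4^w)$, finishing the proof. The only step needing any attention is verifying that the decomposition at $1$ truly decouples the two sub-partitions — this reduces immediately to applying the definition of ``crossing'' to $B$ against a hypothetical straddling block — and after that the argument is entirely routine. Equivalently, one could bypass the recurrence via a direct bijection with Dyck paths of semilength $w$ (open a parenthesis at the smallest element of each block, close at the largest, splitting into two steps for singletons), but the recurrence route is self-contained and avoids any fiddly case analysis about singleton blocks.
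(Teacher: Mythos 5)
Your proof is correct and follows the same route the paper takes: the paper simply cites the fact that noncrossing partitions of $\set{1,\ldots,w}$ are counted by the Catalan numbers and that these grow as $O(4^w)$, while you supply the self-contained derivation (the decomposition at the block of $1$ yielding the Catalan convolution, plus Stirling). Your recurrence argument and the decoupling claim check out, so this is a valid fleshed-out version of the paper's one-line justification, with the explicit constant $c_1=4$.
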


Finally, suppose $\rho$ is a partition of all vertices of $G$ such
that each part is connected.  Then $\rho=\rho^\ins \vee \rho^\out$
where $\rho^\ins$ is a 
partition of the vertices incident to edges in $C$ (in
which each part is connected) and $\rho^\out$ is a partition  of the
vertices incident to edges not in $C$ (in which each part is
connected). 

Because the only vertices in both $\rho^\ins$ and $\rho^\out$ are
those in $\theta_C$, the partition $\rho$ induces on $\theta_C$ is
$\pi^\ins \vee \pi^\out$ where $\pi^\ins$ is the partition induced on
$\theta_C$ by $\rho^\ins$ and $\pi^\out$ is the
partition induced on $\theta_C$ by $\rho^\out$.

\subsection{Algorithm overview}

\newcommand{\w}{\boldsymbol{w}}

The algorithms for optimization and
sampling are closely related.  

The algorithms are based on dynamic programming using the sphere-cut
decomposition of the planar embedded input graph $G$. 

Each algorithm considers every vertex $v$ of the input graph and selects
one edge $e$ that is incident to $v$, and designates each branch cluster that
contains $e$ as a \emph{home cluster} for $v$.

We define a \emph{topological configuration} of a cluster $C$ to be a pair
$(\pi^\ins, \pi^\out)$ of noncrossing partitions of $\theta_C$ with
the following property:
\begin{equation} \label{eqn:at-most-k-parts}
  \text{$\pi^\ins \vee \pi^\out$ has at most $k$ parts.}
\end{equation}
The intended interpretation is that there exist $\rho^\ins$ and
$\rho^\out$ as defined in Section~\ref{sec:partitions} such that
$\phi^\ins$ is the partition $\rho^\ins$ induces on $\theta_C$ and
$\phi^\out$ is the partition $\rho^\out$ induces on $\theta_C$.

\newcommand{\representatives}{\text{representatives}}

We can assume that the vertices of the graph are assigned unique
integer IDs, and that therefore there is a fixed total ordering of
$\theta_C$.  Based on this total ordering, for any partition $\pi$ of
$\theta_C$, let $p$ be the number of parts of $\pi$, and define
$\representatives(\pi)$ to be the $p$-vector 
$(v_1, v_2, \ldots, v_p)$ obtained as follows:
\begin{itemize}
  \item $v_1$ is the
smallest-ID vertex in
$\theta_C$,
\item $v_2$ is the smallest-ID vertex in $\theta_C$ that is not
  in the same part as $v_1$,
  \item $v_2$ is the smallest-ID vertex in $\theta_C$ that is not
  in the same part as $v_1$ and is not in the same part as $v_2$,
\end{itemize}
and so on.

This induces a fixed total ordering of the parts of
$\pi^\ins \vee \pi^\out$.

We define a \emph{weight configuration} of $C$ to be a $k$-vector
$\w = (w_1, \ldots, w_k)$ where each $w_i$ is a nonnegative integer
less than $U$.  There are $U^k$ such vectors.

We define a \emph{weight/cost configuration} of $C$ to be a $k$-vector
together with a nonnegative integer $s$ less than $S$.  There are $U^kS$
such configurations.

We define a \emph{configuration} of $C$ to be a pair consisting of a
topological configuration and a weight/cost configuration.   The number of
configurations of $C$ is bounded by $c^w U^kS$.

The algorithms use dynamic programming to construct, for each cluster
$C$, a table $T_C$ indexed by configurations of $C$.  In the case of
optimization, the table entry $T_C[\Psi]$ corresponding to a
configuration $\Psi$ is \emph{true} or \emph{false}.  For sampling, $T_C[\Psi]$ is a cardinality.

\newcommand{\countx}{\text{count}}

Let $\Psi = ((\pi^\ins,\pi^\out), ((w_1, \ldots, w_k), s))$ be a
configuration of $C$.
Let $\countx(\Psi)$ be the number of partitions $\rho^\ins$ of the vertices incident to
    edges belonging to $C$ with the following properties:
\begin{itemize}
  \item $\rho^\ins$ induces $\pi^\ins$ on $\theta_C$.
\item  Let $\pi=\pi^\ins \vee \phi^\out$.  Let $\representatives(\pi)
  = (v_1, \ldots, v_p)$. 
  Then for $j=1, \ldots, p$, $w_j$ is the total weight of vertices $v$
  for which $C$ is a home cluster and such that $v$ belongs to the
  same part of $\rho^\ins \vee \pi^\out$ as $v_j$.
\end{itemize}

For optimization, $T_C[\Psi]$ is true if $\countx(\Psi)$ is nonzero.
For sampling, $T_C[\Psi]=\countx(\Psi)$.
We describe in Section~\ref{sec:DP} how to populate these tables.
Next we describe how they can be used to solve the problems.

\subsection{Using the tables}

For the root cluster $\hat C$, the cluster that contains all edges of
$G$, $\theta_{\hat C}$ is empty.  Therefore there is only one
partition of $\theta_{\hat C}$, the trivial partition $\pi_0$ consisting of a
single part, the empty set.

To detemine the optimum cost in the optimization problem, simply find
the minimum nonnegative integer $s$ such that, for some
$\w = (w_1, \ldots, w_k)$ such that each $w_i$ lies in $[L,U)$, the
entry $T_{\hat C}[((\pi_0, \pi_0), (\w, s))]$ is \emph{true}.  To find
the solution with this cost, the algorithm needs to find a
``corresponding'' configuration for each leaf cluster $C(\set{uv})$ ; that
configuration tells the algorithm whether the two endpoints $u$ and
$v$ are in the same district.  This information is obtained by a
recursive algorithm, which we presently describe.

Let $C_0$ be a cluster with child clusters $C_1$ and $C_2$.  For
$i=0,1,2$, let $(\pi^\ins_i,\pi^\out_i)$ be a topological configuration for cluster $C_i$.  Then we
say these topological configurations are \emph{consistent} if the following properties
hold:
\begin{itemize}
\item For $i=1,2$, $\pi^\out_i = \pi^\out_0 \vee \pi^\ins_{3-i}$.
\item $\pi^\ins_0 = \pi^\ins_1 \vee \pi^\ins_2$.
\end{itemize}
For $i=0,1,2$, let $(\w_i,s_i)$ be a weight/cost configuration for
$C_i$.  We say they are consistent if  $\w_0 = \w_1 + \w_2$ and $s_0 = s_1+s_2$.

Finally, for
$i=0,1,2$, let $\Psi_i = ((\pi^\ins_i,\pi^\out_i),(\w_i,s_i))$ be a configuration for cluster $C_i$.  Then we
say $\Psi_1, \Psi_2, \Psi_3$ are consistent if the topological configurations are
consistent and the weight/cost configurations are consistent.

\begin{lemma} \label{lem:count}  For a configuration $\Psi_0$ of $C_0$,
  $\countx(\Psi_0) = \sum_{\Psi_1,\Psi_2}\countx(\Psi_1) \cdot
  \countx(\Psi_2)$ where the sum is over pairs $(\Psi_1, \Psi_2)$ of
  configurations of $C_1,C_2$ such that $\Psi_0,\Psi_1,\Psi_2$ are consistent.
\end{lemma}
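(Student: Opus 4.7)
The plan is to set up a bijection between the partitions $\rho^\ins_0$ counted by $\countx(\Psi_0)$ and the quadruples $(\Psi_1,\Psi_2,\rho^\ins_1,\rho^\ins_2)$ in which $(\Psi_0,\Psi_1,\Psi_2)$ are consistent configurations and each $\rho^\ins_i$ is counted by $\countx(\Psi_i)$. This decomposition is natural because the edges of $C_0$ are the disjoint union of those of $C_1$ and $C_2$: an interior partition of $C_0$ should be reconstructible as a join of interior partitions of $C_1$ and $C_2$, while conversely the restriction of $\rho^\ins_0$ to the edge-sets $C_1$ and $C_2$ should yield the desired pair.

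For the forward direction, given $\rho^\ins_0$, I would define $\rho^\ins_i$ (for $i=1,2$) as the partition of vertices incident to $C_i$-edges whose parts are the connected components, within the edge-set $C_i$, of the restrictions of the parts of $\rho^\ins_0$. By construction $\rho^\ins_1 \vee \rho^\ins_2 = \rho^\ins_0$ and each $\rho^\ins_i$ has $C_i$-connected parts. I would then let $\pi^\ins_i$ be the partition induced on $\theta_{C_i}$, set $\pi^\out_i := \pi^\out_0 \vee \pi^\ins_{3-i}$, and compute $\w_i, s_i$ from $\rho^\ins_i$ as prescribed by the definition of $\countx$. Topological consistency then holds because $\rho^\ins_0 = \rho^\ins_1 \vee \rho^\ins_2$ induces on $\theta_{C_0}$ the join of the partitions induced on $\theta_{C_1}$ and $\theta_{C_2}$, using that every boundary vertex of $C_0$ lies on the boundary of at least one of $C_1, C_2$; weight consistency holds because each vertex whose selected edge is in $C_0$ has that edge in exactly one of $C_1, C_2$, so it is a home-cluster vertex of exactly one child and contributes its weight to exactly one of $\w_1, \w_2$; and cost consistency holds because the cut edges of $C_0$ split disjointly into cut edges of $C_1$ and cut edges of $C_2$.

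For the reverse direction, given consistent $(\Psi_0,\Psi_1,\Psi_2)$ together with $\rho^\ins_1, \rho^\ins_2$ counted by $\countx(\Psi_1)$ and $\countx(\Psi_2)$, I would set $\rho^\ins_0 := \rho^\ins_1 \vee \rho^\ins_2$, check that each of its parts is $C_0$-connected (a joined part is either wholly contained in one $C_i$, hence $C_i$-connected, or bridges across via a shared boundary vertex, hence $C_0$-connected through that vertex), and then verify each defining property of $\countx(\Psi_0)$. A final sanity check confirms that the two maps are mutually inverse, which establishes the sum formula.

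The technically delicate point will be the bookkeeping around the representative-based indexing of the weight vectors. The entries of $\w_0, \w_1, \w_2$ are indexed by representatives chosen from possibly different boundary sets $\theta_{C_i}$, so the equation $\w_0 = \w_1 + \w_2$ only becomes meaningful once one exhibits the canonical bijection between the parts of $\pi^\ins_0 \vee \pi^\out_0$ and the parts of $\pi^\ins_i \vee \pi^\out_i$ forced by topological consistency. Verifying that this bijection is well-defined, that the two children agree on how it identifies global parts, and that it matches up the home-cluster weight contributions correctly --- all of which rest on the noncrossing structure developed in Section~\ref{sec:partitions} --- is the main obstacle.
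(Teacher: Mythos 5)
The paper gives no proof of Lemma~\ref{lem:count} at all, so there is nothing to match against; your bijection strategy --- decompose $\rho^\ins_0$ into the $C_1$- and $C_2$-connected components of its parts, recover $\rho^\ins_0$ as $\rho^\ins_1 \vee \rho^\ins_2$, and check that $(\Psi_1,\Psi_2)$ is the unique consistent pair compatible with $(\rho^\ins_1,\rho^\ins_2)$ --- is the natural and surely intended argument, and the topological half of it (joins of partitions are mediated only through shared boundary vertices, so inducing on $\theta_{C_0}$ commutes with joining) is essentially complete as you sketch it.

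However, you explicitly defer the one point where the lemma as literally stated is fragile, and that deferral is a genuine gap rather than routine bookkeeping. First, the consistency condition $\w_0=\w_1+\w_2$ is a componentwise equation between vectors indexed by $\representatives(\pi^\ins_0\vee\pi^\out_0)$, $\representatives(\pi^\ins_1\vee\pi^\out_1)$, and $\representatives(\pi^\ins_2\vee\pi^\out_2)$, which live on three different ground sets; the smallest-ID representative of a global part inside $\theta_{C_1}$ need not be its representative inside $\theta_{C_0}$, so the identification of coordinates is a nontrivial permutation (the paper itself concedes this inside \textsc{Descend}), and your proof must exhibit it and show the two children induce compatible identifications. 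Second, and more seriously, a part of the global partition can meet $\theta_{C_1}\cap\theta_{C_2}$ without meeting $\theta_{C_0}$ at all (a district completed strictly inside $C_0$). Such a part has a coordinate in $\w_1$ or $\w_2$ but corresponds to no part of $\pi^\ins_0\vee\pi^\out_0$, and the definition of $\countx$ only constrains the first $p$ coordinates of $\w$; without specifying where the weight of a completed district is recorded in $\w_0$ (and similarly where its accumulated cut cost goes --- note that $\countx$ as defined never mentions $s$, so ``cost consistency'' is being checked against a condition absent from the definition), the claimed equality cannot be verified and can in fact fail for some ways of filling the unconstrained coordinates. A correct proof must first repair these definitions, and that repair is precisely the content you have labeled ``the main obstacle'' and left open.
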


The recursive algorithm, given a configuration $\Psi$ for a cluster
$C$ such that $T_C[\Psi]$ is \emph{true}, finds configurations for all
the clusters that are descendants of $C$ such that, for each nonleaf
descendant and its children, the corresponding configurations are
consistent; for each descendant cluster $C'$, the configuration
$\Psi'$ selected for it must have the property that $T_{C'}[\Psi']$ is
\emph{true}.

The algorithm is straightforward:
\begin{tabbing}
  define \textsc{Descend}$(C_0, \Psi_0)$:\\
\  \emph{precondition:} $T_{C_0}[\Psi_0]=\text{\it true}$\\
  assign $\Psi_0$ to $C_0$\\
  \ if $C_0$ is not a leaf config\\
  \quad \= for each config $\Psi_1 = ((\pi^\ins_1,\pi^\out_1),
  (\w_1, s_1))$ of $C_0$'s left child $C_1$,\\
  \> \quad \= if $T_{C_1}[\Psi_1]$ is \emph{true}\\
  \> \> \quad \= for each topological config
  $(\pi^\ins_2,\pi^\out_2)$ of $C_0$'s right child $C_2$\\
\> \> \> \quad \= let $(\w_2,s_2)$ be the weight/cost config
of $C_2$ such that\\
\> \>\> \> \quad \=  $\Psi_0, \Psi_1, \Psi_2$ are consistent\\
\> \> \> \> \> \quad \=
where $\Psi_2= ((\pi^\ins_2,\pi^\out_2), (\w_2,s_2))$\\
\> \> \> \> if $T_{C_2}[\Psi_2]$ = \emph{true}\\
\> \> \> \> \> call \textsc{Descend}$(C_1, \Psi_1)$ and \textsc{Descend}$(C_2, \Psi_2)$\\
\> \> \> \> \> return, exiting out of loops
\end{tabbing}

Lemma~\ref{lem:count} shows via induction from root to leaves that the procedure will
successfully find configurations for all clusters that are descendants
of $C_0$.  For the root cluster $\hat C$ and a configuration $\hat \Psi$ of
$\hat C$ such that $T_{\hat C}[\hat \Psi]$ is \emph{true}, consider
the $\Psi_C$ configurations found for each leaf cluster, and let
$(\pi^\ins_C, \pi^\out_C)$ be the topological configuration of $\Psi_C$
Consider the partition
$$\rho = \bigvee_C \pi^\ins_C$$
where the join is over all leaf clusters $C$.   Because there are no
vertices of degree one, for each leaf cluster $C(\set{uv})$, both $u$
and $v$ are boundary vertices, so $\rho$ is a partition of all
vertices of the input graph.  Induction from leaves to root shows that
this partition agrees with the weight/cost part $(\hat \w,\hat s)$ of the configuration
$\hat \Psi$.  In particular, the weights of the parts of $\rho$
correspond to the weights of $\hat w$, and the cost of the partition
equals $\hat s$.

In the step of \textsc{Descend} that selects $(\w_2,s_2)$, there is exactly
one weight/cost config that is consistent (it can be obtained by
permuting the elements of $\w_1$ and then subtracting from $\w_0$ and
subtracting $s_1$ from $s_0$).  By an appropriate choice of an
indexing data structure to represent the tables, we can ensure that the running time of
\textsc{Descend} is within the running time stated in
Theorem~\ref{thm:algorithms}.  For optimization, it remains to show
how to populate the tables.


\begin{tabbing}
  define \textsc{Descend}$(C_0, \Psi_0, p)$:\\
\ \emph{precondition:} $p \leq T_{C_0}[\Psi_0]$\\
\  assign $\Psi_0$ to $C_0$\\
\ if $C_0$ is not a leaf config\\
    \quad \= for each config $\Psi_1 = ((\pi^\ins_1,\pi^\out_1),
(\w_1, s_1))$ of $C_0$'s left child $C_1$,\\
  \> \quad \= for each topological config
  $(\pi^\ins_2,\pi^\out_2)$ of $C_0$'s right child $C_2$\\
\> \> \quad \= let $(\w_2,s_2)$ be the weight/cost config
of $C_2$ such that\\
\> \> \> \quad \=  $\Psi_0, \Psi_1, \Psi_2$ are consistent\\
\> \> \> \> \quad \=
where $\Psi_2= ((\pi^\ins_2,\pi^\out_2), (\w_2,s_2))$\\
\> \>     \> $\Delta := T_{C_1}[\Psi_1] \cdot T_ {C_2}[\Psi_2]$\\
    \> \> \> if $p \leq \Delta$\\
    \> \> \> \quad \= $q := \lfloor p / T_ {C_2}[\Psi_2]\rfloor$\\
        \> \> \> \> $r := r \bmod T_ {C_2}[\Psi_2]$\\
\> \> \> \> call \textsc{Descend}$(C_1, \Psi_1, q)$ and \textsc{Descend}$(C_2, \Psi_2, r)$\\
\> \> \> \> return\\
\> \>\> else $p := p - \Delta$ and continue
\end{tabbing}

Induction shows that this procedure, applied to root
cluster $\hat C$ and a configuration $\hat \Psi$ and an integer $p\leq
T_{\hat C}[\hat \Psi]$, selects the $p^{th}$ solution among those
``compatible'' with $\hat \Psi$.  This can be used for random
generation of solutions with given district populations and a given cost.
Again, the running time for the procedure is within that
stated  in Theorem~\ref{thm:algorithms}.

\subsection{Populating the tables}
\label{sec:DP}

For this section, let us focus on the tables needed for sampling.
Populating the table for a leaf cluster is straightforward.
Therefore, suppose $C_0$ is a cluster with children $C_1$ and $C_2$.  
We first observe that, given noncrossing partitions $\pi^\out_0$ of
$\theta_{C_0}$, $\pi^\ins_1$ of $\theta_{C_1}$, and $\pi^\ins_2$ of
$\theta_{C_2}$, there are unique partitions $\pi^\ins_0, \pi^\out_1,
\pi^\out_2$ such that the topological configurations $(\pi^\ins_0, \pi^\out_0),
(\pi^\ins_1, \pi^\out_1), (\pi^\ins_2,\pi^\out_2)$ are consistent.
(The formulas that show this are in the pseucode below.)

For the second observation,  consider a
configuration $\Psi_0 =(\kappa_0,(\w_0,s_0))$ of $C_0$.  
Then $\countx(\Psi_0)$ is
\begin{equation}
 \sum_{\kappa_1,\kappa_2} \sum_{(\w_1,s_1),(\w_2,s_2)}
 \countx((\kappa_1, (\w_1,s_1))) \cdot \countx((\kappa_2, (\w_2,s_2)))
\end{equation}
 where the first sum is over pairs of topological configurations
 $\kappa_1$ for $C_1$ and and $\kappa_2$ where
 $\kappa_0,\kappa_1,\kappa_2$ are consistent, and the second sum is
 over pairs of weight/cost configurations that are consistent with $(\w_0,s_0)$.
Note that because of how weight/cost configuration consistency is
defined, the second sum mimics multivariate polynomial multiplication.
We use these observations to define the procedure that populates the
table for $C_0$ from the tables for $C_1$ and $C_2$.

\begin{tabbing}
  def \textsc{Combine}$(C_0, C_1, C_2)$:\\
 \ initialize each entry of $T_{C_0}$ to zero\\
\ for each noncrossing partition $\pi^\out_0$ of $\theta_{C_0}$\\
\quad \= for each noncrossing partition $\pi^\ins_1$ of $\theta_{C_1}$\\
\> \quad \= for each noncrossing partition $\pi^\ins_2$ of $\theta_{C_2}$\\
\> \> \quad \= $\pi^\out_1 = \pi^\out_0 \vee \pi^\ins_2$\\
\> \> \>           $\pi^\out_2 = \pi^\out_0 \vee \pi^\ins_1$\\
\> \> \>           $\pi^\ins_0 = \pi^\ins_1 \vee \pi^\ins_2$\\
\> \> \>           \emph{comment: } \= now we populate entries of
$T_{C_0}[\cdot]$ indexed by\\
\> \> \>                         \>configurations of $C_0$ with\\
\> \> \>                         \>topological configuration $(\pi^\ins_0,\pi^\out_0)$.\\
\>\>\> for $i=1,2$,\\
\>\>\>\quad \= let $p_i(\xvec,y)$ be a polynomial over variables $x_1,
\ldots, x_k, y$\\
\>\>\>\>\quad\= such that the coefficient of $x_1^{w_1} \cdots x_k^{w_k}y^s$\\
\>\>\>\> \> is $T_{C_i}[((\pi^\ins_0,\pi^\out_0), ((w_1, \ldots,
w_k),s))]$\\
\>\>\> let $p(\xvec,y)$ be the product of $p_1(\xvec,y)$ and $p_2(\xvec,y)$\\
\>\>\> for every weight/cost configuration $((w_1, \ldots, w_k),s)$\\
\>\>\>\> add to $T[((\pi^\ins_0,\pi^\out_0),((w_1, \ldots, w_k),s))]$ the\\
\>\>\>\>\quad \= coefficient of $x_1^{w_1}\cdots x_k^{w_k}y^s$ in $p(\xvec,y)$
\end{tabbing}

The three loops involve at most $c^w$ iterations, for some
constant $c$.  Multivariate polynomial multiplication can be done using
multidimensional FFT.  The time required is $O(N \log N)$, where
$N=U^kS$.  (This use of FFT to speed up an algorithm is by now a
standard algorithmic technique.) It follows that the running time of
the algorithm to populate the tables is as described in
Theorem~\ref{thm:algorithms}.

\bibliography{refs}

\begin{thebibliography}{10}

\bibitem{bangia2017redistricting}
Sachet Bangia, Christy~Vaughn Graves, Gregory Herschlag, Han~Sung Kang, Justin
  Luo, Jonathan~C. Mattingly, and Robert Ravier.
\newblock Redistricting: Drawing the line, 2017.

\bibitem{DBLP:conf/iwpec/BodlaenderLP09}
Hans~L. Bodlaender, Daniel Lokshtanov, and Eelko Penninkx.
\newblock Planar capacitated dominating set is \emph{W}[1]-hard.
\newblock In Jianer Chen and Fedor~V. Fomin, editors, {\em Proceedings of the
  4th International Workshop on Parameterized and Exact Computation}, volume
  5917 of {\em Lecture Notes in Computer Science}, pages 50--60. Springer,
  2009.

\bibitem{carter2019mergesplit}
Daniel Carter, Gregory Herschlag, Zach Hunter, and Jonathan Mattingly.
\newblock A merge-split proposal for reversible monte carlo markov chain
  sampling of redistricting plans, 2019.

\bibitem{Chen}
J.~Chen.
\newblock Expert report of jowei chen, {Ph.}{D.}, raleigh wake citizen's
  association et al. vs. the wake county board of elections, 2017.

\bibitem{Cohen-AddadKY18}
Vincent Cohen{-}Addad, Philip~N. Klein, and Neal~E. Young.
\newblock Balanced centroidal power diagrams for redistricting.
\newblock In {\em Proceedings of the 26th {ACM} International Conference on
  Advances in Geographic Information Systems}, pages 389--396, 2018.

\bibitem{parameterized-algorithms}
Marek Cygan, Fedor~V. Fomin, Lukasz Kowalik, Daniel Lokshtanov, Daniel Marx,
  Marcin Pilipczuk, Michal Pilipczuk, and Saket Saurabh.
\newblock {\em Parameterized Algorithms}.
\newblock Springer, 1st edition, 2015.

\bibitem{deford2019recombination}
Daryl DeFord, Moon Duchin, and Justin Solomon.
\newblock Recombination: A family of markov chains for redistricting, 2019.

\bibitem{DBLP:conf/iwpec/DomLSV08}
Michael Dom, Daniel Lokshtanov, Saket Saurabh, and Yngve Villanger.
\newblock Capacitated domination and covering: {A} parameterized perspective.
\newblock In {\em Proceedings of the 3rd International WorkshopParameterized
  and Exact Computation}, volume 5018 of {\em Lecture Notes in Computer
  Science}, pages 78--90. Springer, 2008.

\bibitem{DornPBF10}
Frederic Dorn, Eelko Penninkx, Hans~L. Bodlaender, and Fedor~V. Fomin.
\newblock Efficient exact algorithms on planar graphs: Exploiting sphere cut
  decompositions.
\newblock {\em Algorithmica}, 58(3):790--810, 2010.

\bibitem{DyerF85}
Martin~E. Dyer and Alan~M. Frieze.
\newblock On the complexity of partitioning graphs into connected subgraphs.
\newblock {\em Discret. Appl. Math.}, 10(2):139--153, 1985.

\bibitem{EppsteinGKM17}
David Eppstein, Michael~T. Goodrich, Doruk Korkmaz, and Nil Mamano.
\newblock Defining equitable geographic districts in road networks via stable
  matching.
\newblock In {\em Proceedings of the 25th {ACM} International Conference on
  Advances in Geographic Information Systems}, pages 52:1--52:4, 2017.

\bibitem{DBLP:journals/algorithmica/FeldmannM20}
Andreas~Emil Feldmann and D{\'{a}}niel Marx.
\newblock The parameterized hardness of the k-center problem in transportation
  networks.
\newblock {\em Algorithmica}, 82(7):1989--2005, 2020.

\bibitem{DBLP:journals/iandc/FellowsFLRSST11}
Michael~R. Fellows, Fedor~V. Fomin, Daniel Lokshtanov, Frances~A. Rosamond,
  Saket Saurabh, Stefan Szeider, and Carsten Thomassen.
\newblock On the complexity of some colorful problems parameterized by
  treewidth.
\newblock {\em Inf. Comput.}, 209(2):143--153, 2011.

\bibitem{FoxKM15}
Kyle Fox, Philip~N. Klein, and Shay Mozes.
\newblock A polynomial-time bicriteria approximation scheme for planar
  bisection.
\newblock In {\em Proceedings of the 47th Annual {ACM} on Symposium on Theory
  of Computing, {STOC} 2015, Portland, OR, USA, June 14-17, 2015}, pages
  841--850, 2015.

\bibitem{garfinkel_optimal_1970}
R.~S. Garfinkel and G.~L. Nemhauser.
\newblock Optimal political districting by implicit enumeration techniques.
\newblock {\em Management Science}, 16(8):B--495, April 1970.

\bibitem{GargSV99}
Naveen Garg, Huzur Saran, and Vijay~V. Vazirani.
\newblock Finding separator cuts in planar graphs within twice the optimal.
\newblock {\em {SIAM} J. Comput.}, 29(1):159--179, 1999.

\bibitem{DBLP:journals/corr/GuptaSZ17}
Sushmita Gupta, Saket Saurabh, and Meirav Zehavi.
\newblock On treewidth and stable marriage.
\newblock {\em CoRR}, abs/1707.05404, 2017.

\bibitem{DBLP:journals/siamdm/GutinJW16}
Gregory~Z. Gutin, Mark Jones, and Magnus Wahlstr{\"{o}}m.
\newblock The mixed chinese postman problem parameterized by pathwidth and
  treedepth.
\newblock {\em {SIAM} J. Discret. Math.}, 30(4):2177--2205, 2016.

\bibitem{helbig1972political}
Robert~E Helbig, Patrick~K Orr, and Robert~R Roediger.
\newblock Political redistricting by computer.
\newblock {\em Communications of the ACM}, 15(8):735--741, 1972.

\bibitem{herschlag2018quantifying}
Gregory Herschlag, Han~Sung Kang, Justin Luo, Christy~Vaughn Graves, Sachet
  Bangia, Robert Ravier, and Jonathan~C. Mattingly.
\newblock Quantifying gerrymandering in {N}orth {C}arolina, 2018.

\bibitem{herschlag2017evaluating}
Gregory Herschlag, Robert Ravier, and Jonathan~C. Mattingly.
\newblock Evaluating partisan gerrymandering in {W}isconsin, 2017.

\bibitem{hess}
S.~W. Hess, J.~B. Weaver, H.~J. Siegfeldt, J.~N. Whelan, and P.~A. Zitlau.
\newblock Nonpartisan political redistricting by computer.
\newblock {\em Operations Research}, 13(6):998--1006, 1965.

\bibitem{ItoGZN}
Takehiro Ito, Kazuya Goto, Xiao Zhou, and Takao Nishizeki.
\newblock Partitioning a multi-weighted graph to connected subgraphs of almost
  uniform size.
\newblock {\em {IEICE} Trans. Inf. Syst.}, 90-D(2):449--456, 2007.

\bibitem{ItoZN}
Takehiro Ito, Xiao Zhou, and Takao Nishizeki.
\newblock Partitioning a graph of bounded tree-width to connected subgraphs of
  almost uniform size.
\newblock {\em J. Discrete Algorithms}, 4(1):142--154, 2006.

\bibitem{JansenKMS13}
Klaus Jansen, Stefan Kratsch, D{\'{a}}niel Marx, and Ildik{\'{o}} Schlotter.
\newblock Bin packing with fixed number of bins revisited.
\newblock {\em J. Comput. Syst. Sci.}, 79(1):39--49, 2013.

\bibitem{planarity}
Philip~N. Klein and Shay Mozes.
\newblock {O}ptimization {A}lgorithms for {P}lanar {G}raphs.
\newblock \url{http://planarity.org/}.
\newblock accessed June 2018.

\bibitem{DBLP:conf/approx/MarxSS16}
D{\'{a}}niel Marx, Ario Salmasi, and Anastasios Sidiropoulos.
\newblock Constant-factor approximations for asymmetric {TSP} on
  nearly-embeddable graphs.
\newblock In {\em Proceedings of the 19th Approximation, Randomization, and
  Combinatorial Optimization. Algorithms and Techniques}, volume~60 of {\em
  LIPIcs}, pages 16:1--16:54, 2016.

\bibitem{mehrotra_optimization_1998}
Anuj Mehrotra, Ellis~L. Johnson, and George~L. Nemhauser.
\newblock An optimization based heuristic for political districting.
\newblock {\em Management Science}, 44(8):1100--1114, August 1998.

\bibitem{NDSolomon}
Lorenzo Najt, Daryl~R. DeFord, and Justin Solomon.
\newblock Complexity and geometry of sampling connected graph partitions.
\newblock {\em CoRR}, abs/1908.08881, 2019.

\bibitem{ParkPhillips}
J.~K. Park and C.~A. Phillips.
\newblock Finding minimum-quotient cuts in planar graphs.
\newblock In {\em Proceedings of the 25th Annual {ACM} Symposium on Theory of
  Computing}, pages 766--775, 1993.

\bibitem{Pegden}
W.~Pegden.
\newblock {P}ennsylvania’s congressional districting is an outlier: Expert
  report, {L}eague of {W}omen {V}oters vs. {P}ennsyl- vania {G}eneral
  {A}ssembly, 2017.

\bibitem{pildesbrief}
Richard~H Pildes, Tacy~F Flint, and Sidley Austin.
\newblock Brief of political geography scholars as amici curiae in support of
  appellees.

\bibitem{Rao87}
Satish Rao.
\newblock Finding near optimal separators in planar graphs.
\newblock In {\em Proceedings of the 28th Annual {IEEE} Symposium on
  Foundations of Computer Science}, pages 225--237, 1987.

\bibitem{Rao92}
Satish Rao.
\newblock Faster algorithms for finding small edge cuts in planar graphs.
\newblock In {\em Proceedings of the 24th Annual {ACM} Symposium on Theory of
  Computing}, pages 229--240, 1992.

\end{thebibliography}
\bibliographystyle{plain}

\end{document}